
\documentclass[journal]{IEEEtran}
\ifCLASSINFOpdf
\else
\fi
%
%

%
\usepackage{graphicx}
\usepackage{amsfonts}
\usepackage{cite}
\usepackage{color}
\usepackage{float}  
\usepackage{subfigure} 

\usepackage[cmex10]{amsmath}
\usepackage{amssymb}
\usepackage{amsthm}
\usepackage{mathrsfs}
\usepackage{bm}
\usepackage[mathscr]{eucal}

\usepackage[short]{optidef}

%

\newtheorem{theorem}{Theorem}
\newtheorem{corollary}{Corollary}

%
\usepackage{algorithm}
\usepackage{algorithmic}

\usepackage[hidelinks]{hyperref}

\begin{document}
%
\title{RIS Empowered Near-Field Covert Communications}
%
%
%

\author{Jun Liu,
       Gang Yang,~\IEEEmembership{Member,~IEEE,}
     Yuanwei Liu, \IEEEmembership{Fellow, IEEE,} 
     and Xiangyun Zhou,  \IEEEmembership{Fellow, IEEE}
\thanks{This work was supported by the Shenzhen Science and Technology Program under Grants JCYJ20220530164814032 and JCYJ20220818103201004.}
\thanks{Jun Liu is with the National Key Laboratory of Wireless Communications, University of Electronic Science and Technology of China (UESTC), Chengdu 611731, China (e-mail: junl@std.uestc.edu.cn). }
\thanks{Gang Yang is with the Shenzhen Institute for Advanced Study, and the National Key Laboratory of Wireless Communications, University of Electronic Science and Technology of China (UESTC), Chengdu 611731, China (e-mail: yanggang@uestc.edu.cn). (\emph{Corresponding author: Gang Yang})}
\thanks{Yuanwei Liu is with the Department of Electrical and Electronic Engineering, The University of Hong Kong, Hong Kong (e-mail: yuanwei@hku.hk).}
\thanks{Xiangyun Zhou is with the School of Engineering, The Australian National University, Canberra, ACT 2600, Australia (e-mail: xiangyun.zhou@anu.edu.au).}}
\maketitle

\begin{abstract}
This paper studies an extremely large-scale reconfigurable intelligent surface (XL-RIS) empowered covert communication system in the near-field region. Alice covertly transmits messages to Bob with the assistance of the XL-RIS, while evading detection by Willie. To enhance the covert communication performance, we maximize the achievable covert rate by jointly optimizing the hybrid analog and digital beamformers at Alice, as well as the reflection coefficient matrix at the XL-RIS. An alternating optimization algorithm is proposed to solve the joint beamforming design problem. For the hybrid beamformer design, a semi-closed-form solution for fully digital beamformer is first obtained by a weighted minimum mean-square error based algorithm, then the baseband digital and analog beamformers at Alice are designed by approximating the fully digital beamformer via manifold optimization. For the XL-RIS's reflection coefficient matrix design, a low-complexity alternating direction method of multipliers based algorithm is proposed to address the challenge of large-scale variables and unit-modulus constraints. Numerical results unveil that i) the near-field communications can achieve a higher covert rate than the far-field covert communications in general, and still realize covert transmission even if Willie is located at the same direction as Bob and closer to the XL-RIS; ii) the proposed algorithm can enhance the covert rate significantly compared to the benchmark schemes; iii) the proposed algorithm leads to a beam diffraction pattern that can bypass Willie and achieve high-rate covert transmission to Bob.


\end{abstract}

\begin{IEEEkeywords}
 Covert communications, extremely large-scale reconfigurable intelligent surface, hybrid beamforming, near-field communications, reflection coefficient matrix.
\end{IEEEkeywords}

%
\IEEEpeerreviewmaketitle

\section{Introduction}
\IEEEPARstart{C}{overt} communication is an emerging and cutting-edge technology, which aims at hiding the communication devices and/or communication behavior from being detected by a watchful warden. As a pioneering work, the authors in~\cite{bash2013limits} established the ``square-root-law'' for the fundamental limits of covert communication, which claims that at most $\mathcal{O}(\sqrt{n})$ bits of information can be transmitted covertly and reliably over $n$ channel uses in the additive white Gaussian noise (AWGN) channel. According to this law, the covert rate for per channel use is asymptotic to zero. Fortunately, several schemes have been proposed to achieve positive covert communication rates, e.g., multiple antennas~\cite{zheng2019multi}, noise uncertainty~\cite{he2017covert}, channel uncertainty~\cite{wang2018covert}, full-duplex receiver~\cite{shahzad2018achieving} and relay~\cite{hu2018covert, hu2019transmission}. However, these methods~\cite{zheng2019multi,soltani2018covert,he2017covert,wang2018covert,shahzad2018achieving,hu2018covert, hu2019transmission} are typically channel-adaptive, thereby making covert communication highly dependent on the wireless propagation environment.

To tackle this issue, several studies have integrated reconfigurable intelligent surfaces (RIS) into covert communication systems~\cite{wang2021intelligent,si2021covert,zhou2021intelligent,wang2022covert,lv2021covert}. RIS is composed of a large number of passive reflection elements and can manipulate the wireless propagation environment in a programmatic manner. The main objective of introducing RIS into wireless communication systems is to address the base stations' blockage issues and enhance the network coverage~\cite{Liaskos2018wireless, yang2021reconfigurable}. The authors of~\cite{wang2021intelligent} demonstrated the great potential of integrating RIS into covert communications. The active and passive beamformers were jointly designed in~\cite{si2021covert} for RIS-assisted covert communication systems. In order to meet the low-latency application requirement, the authors of~\cite{zhou2021intelligent} studied RIS-assisted covert communication systems with a finite number of channel uses. The RIS-assisted covert communication systems were also studied with other enabling technologies such as unmanned aerial vehicle~\cite{hu2018covert} and non-orthogonal multiple access~\cite{lv2021covert}. 

However, there are some weaknesses in the existing works on RIS-assisted covert communications. First, the RIS provides limited covert-rate performance enhancement due to the double fading effect in RIS-assisted covert systems, as it is equipped with a small number of reflection elements~\cite{di2020smart}. This problem becomes more significant at higher frequency bands with severe path loss, such as millimeter wave (mmWave) and terahertz (THz). Second, these works focus on the \textit{planar-wave} propagation in far-field communications, resulting in limited covert rate performance in scenarios with spatial correlation. When Willie and Bob are located at the same angle with respect to Alice, it is almost impossible to distinguish them under the planar-wave model. Consequently, Alice's transmission power is limited to keep the received power at Willie below a certain level, hindering the improvement of covert communication performance. Third, most existing works focused on fully digital architecture at Alice, which are not suitable for covert communication systems with massive antennas. Fourth, these works~\cite{wang2021intelligent,si2021covert,zhou2021intelligent,wang2022covert,lv2021covert} assume that Bob and Willie are equipped with a single antenna. In practical scenarios, Willie is more likely to adopt multi-antenna technology to enhance its detection performance.

It is noticed that the extremely large-scale RIS (XL-RIS) was recently proposed to compensate for the severe double-fading effect in the cascaded channel~\cite{mu2023ris,liu2023low,yu2023channel,shen2023multi,miridakis2022zero,han2022localization}. On the one hand, the XL-RIS achieves significant signal enhancement by enlarging the array aperture with a massive number of low-cost reflection elements. On the other hand, the XL-RIS provides great potential for enhancing the covert performance in the near-field region. As both the array aperture and operating frequencies increase, wireless communications inevitably operate in the near-field region. The Rayleigh distance that is commonly used to distinguish between near and far fields~\cite{liu2023near}, is given by $\frac{2D^2}{\lambda}$, with $D$ and $\lambda$ denoting the array aperture and the wavelength, respectively. In contrast to the conventional far-field plane-wave channel model, the electromagnetic propagation in near field is characterized by the \textit{spherical-wave} channel model that depends on both angle and distance~\cite{liu2023near, wang2023near}, providing the additional distance degree of freedom (DoF) for wireless communications design. The near-field transmission enables the beam pattern to focus on a specific area, i.e., \textit{beam focusing}~\cite{zhang2022beam,zhang20236g,zhang2023physical}, thus the XL-RIS is promising for achieving efficient covert communications. 

Despite that some works have studied the beamforming design in XL-RIS-assisted systems~\cite{liu2023low, shen2023multi}, the beamforming design in covert communication system has different characteristic from these works. In the scenario considered in~\cite{liu2023low, shen2023multi}, all users are legitimate and only require the joint design of active beamforming at the BS and passive beamforming at the XL-RIS to enhance the performance of legitimate users. However, in a covert communication system, beamforming design not only improves the performance of legitimate users but also suppresses the power leakage to the warden, Willie. Consequently, the beamforming design should balance the ability to enhance legitimate-user-link gain and suppress the power leakage, thus the problem is more intricate in covert communication systems.


 %
 
 

To the best of our knowledge, the XL-RIS empowered near-field covert communications have not been studied yet in the literature. In this paper, we investigate the performance analysis and optimization for an XL-RIS empowered covert communication system operating in the near-field region. The main contributions are as follows: 

\begin{itemize}
    \item We explore an XL-RIS empowered near-field covert communication system, consisting of a legitimate transmitter named Alice, a covert receiver named Bob, a warden receiver named Willie, and an XL-RIS. Alice adopts hybrid beamforming for high-frequency transmission, and its direct link to either Bob or Willie is blocked. Both Bob and Willie are within the near-field region of the XL-RIS. The detection error probability (DEP) performance is first analyzed for Willie. The optimal detection threshold is further derived in closed form, and the resulting minimal DEP is exploited to obtain the desired covertness constraint for system optimization. 

    \item An optimization problem is formulated to maximize the achievable covert rate at Bob. Specifically, this problem jointly optimizes Alice's baseband digital beamformer and analog beamformer, as well as the XL-RIS's reflection coefficient matrix, subject to Alice's transmit power constraint and analog beamformer unit-modulus constraints, Willie's covertness constraint, and XL-RIS's reflection coefficient constraints. The optimization problem features large-scale coupled variables and unit-modulus constraints, thus is non-convex and challenging to solve in general. 
        
    \item An alternating optimization (AO)-based algorithm is proposed to decompose the original problem into two sub-problems. For Alice's hybrid beamformer design subproblem, a two-stage algorithm is proposed. In the first stage, a semi-closed-form solution for fully digital beamformer is derived by a weighted minimum mean-square error (WMMSE)-based algorithm. In the second stage, the baseband digital and analog beamformers are obtained by approximating the fully digital beamformer via manifold optimization (MO). For the XL-RIS's reflection coefficient design subproblem, a low-complexity alternating direction method of multipliers (ADMM)-based algorithm is proposed to address the challenge of large-scale variables and unit-modulus constraints. Also, both the convergence and computational complexity of the proposed algorithm are analyzed. 

    \item Numerical results unveil that 1) the near-field communications in general achieve a higher covert rate than the far-field covert communications, and still realize covert transmission even if Willie is located at the same direction as Bob and closer to the XL-RIS; 2) the proposed algorithm can improve the achievable covert rate compared to the benchmark schemes such as random-phase and zero-forcing; 3) the proposed algorithm leads to a new beam diffraction pattern, i.e., a transitional state between beam steering and beam focusing, which can split and bypass the non-target user Willie, then converge on the intended user Bob for high-rate covert transmission.
\end{itemize}



The rest of this work is organized as follows. Section~\ref{Sec:Sys_model} presents the system model for the XL-RIS empowered near-field covert communication system, including the system description, channel model, and signal model. Section~\ref{Sub:Det_per_anay} analyzes the DEP performance at Willie. Section~\ref{Sub:cov_comm_deign} formulates an achievable covert rate maximization problem, proposes an AO-based solving algorithm, and provides its convergence and computational complexity analyses. Section~\ref{Sub:Nem_res} gives numerical results. Section~\ref{Sub:Conc} concludes this work.

\textit{Notations:} The main notations are listed as follows. $\mathcal{CN}\left(\mu,\sigma^2 \right)$ denotes the the circularly symmetric complex Gaussian distribution with mean $\mu$ and variance $\sigma^2$. $\mathbf{I}_M$ denotes the $M\times M$ identity matrix. For any vector $\mathbf{v}$, $v_i$ denote the $i$-th element, and $||\mathbf{v}||$ denotes the Euclidean norm. $\text{diag}(\mathbf{v})$ denotes the diagonal operation. For any matrix $\mathbf{V}$, the transpose, conjugate, and conjugate transpose are $\mathbf{V}^{\sf T}$, $\mathbf{V}^*$ and $\mathbf{V}^{\sf H}$, respectively, and $v_{i,j}$ is the $i$-th row and $j$-th column element. $||\mathbf{V}||_{\sf F}$ denotes the Frobenius norm. $|\mathbf{V}|$ denotes the determinant of $\mathbf{V}$. $\text{vec}(\mathbf{V})$ and $\text{Tr}(\mathbf{V})$ denote the vectorization operation and trace of the matrix. $|x|$ denotes the absolute value of $x$, and $\Re\{x\}$ is its real part.  $\circ$ and $\otimes$ denote the Hadamard product and Kronecker product, respectively.

\section{System Model} \label{Sec:Sys_model}
In this section, we present the system description, far-field and near-field channel models, and signal model for the XL-RIS empowered near-field covert communication system.

\begin{figure}[t]
    \centering
    \includegraphics[width=0.80\linewidth]{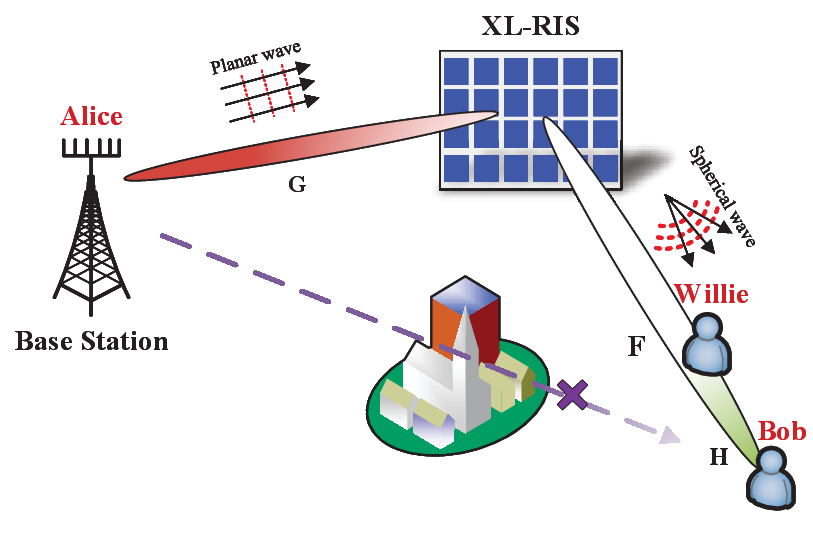}
    \caption{Illustration of an XL-RIS empowered near-field covert communication system.} 
    \label{fig:system}
\end{figure}

\subsection{System Description}\label{subs:sys_des}
As depicted in Fig.~\ref{fig:system}, we consider an XL-RIS empowered near-field covert communication system, which consists of a legitimate transmitter (referred to as Alice) equipped with $M_{\sf A} \ (M_{\sf A} \geqslant 1)$ antennas, a covert receiver (referred to as Bob) equipped with $M_{\sf B} \ (M_{\sf B}\geqslant 1)$ antennas, and a warden receiver (referred to as Willie) equipped with $M_{\sf W} \ (M_{\sf W} \geqslant 1)$ antennas. Alice operates in the high-frequency band, such as mmWave or THz, with a large number of antennas. Therefore, the hybrid beamforming architecture is adopted at Alice~\cite{sohrabi2016hybrid}. In the meanwhile, fully-digital antenna architecture is adopted at Bob for simplicity. Due to blockage in areas with lots of obstacles~\cite{wang2021intelligent,chen2021enhancing}, the direct links between Alice and both Bob and Willie are unavailable~\footnote{The detection performance analysis at Willie in Section~\ref{Sub:Det_per_anay} and the joint beamforming design for covert communication in Section~\ref{Sub:cov_comm_deign} can be extended to the scenario when Alice and Bob as well as Willie have weak direct links, as studied in~\cite{wang2021intelligent,si2021covert,zhou2021intelligent}.}. The XL-RIS is utilized to enable the covert communication from Alice to Bob. From the criterion of RIS deployment~\cite{Liaskos2018wireless}, the XL-RIS is deployed to be closer to Bob, thus is assumed to be located in the far-field region of Alice~\cite{cheng2023achievable}. A uniform planar array (UPA) is employed at the XL-RIS, which is has $N_{\sf y}$  horizontal rows and $N_{\sf z}$ vertical columns ($N = N_{\sf y} \times N_{\sf z}$).

The distance between the XL-RIS and both Bob and Willie is shorter than the Rayleigh distance $d_{\sf R} = \frac{2 f (D+D_{\sf B})^2}{c}$, where $f$ represents the carrier frequency, $c$ is the light speed, and $D$ and $D_{\sf B}$ are antenna apertures of the XL-RIS and Bob, respectively. In other words, Bob and Willie are assumed to be located within the near-field region of the XL-RIS~\footnote{It is most valuable to consider that Bob and Willie are in the near-field region. Otherwise, if Willie is in the far-field region while Bob is in the near-field region, Willie's received power is likely to be lower than the noise power due to the double-fading effect in the mmWave band, which is a meaningless scenario. On the contrary, if Bob is in the far-field region while Willie is in the near-field region, the scenario degenerates into the RIS-assisted far-field covert communication, which has been studied by a large number of literatures~\cite{wang2021intelligent,si2021covert,zhou2021intelligent,wang2022covert,lv2021covert}.}. This assumption is justified for the  XL-RIS empowered system from the following two aspects. First, the XL-RIS inherently has a larger antenna aperture, and $d_{\sf R}$ is proportional to the square of $D$. Second, future wireless networks are expected to operate at tremendously high-frequency bands, where $d_{\sf R}$ is proportional to the carrier frequency $f$. Consequently, the reflection signals from the XL-RIS to both Bob and Willie exhibit spherical propagation.

For Willie, the receiver noise power fluctuates due to the variations of wireless communication environment factors, such as temperature, humidity and weather, and the receiver also suffers from calibration errors when measuring the noise power. Therefore, this paper takes into account the noise uncertainty at Willie receiver, and adopts the bounded uncertainty model~\cite{goeckel2015covert,ta2019covert}. Specifically, the exact noise power at Willie $\sigma_{\sf W}^2$ in dB scale is uniformly distributed within $\left[ {\widehat \sigma _{\sf{W}\text{,dB}}^2 - {\rho _{\text{dB}}},\widehat \sigma _{\sf{W}\text{,dB}}^2 + {\rho _{\text{dB}}}} \right]$, where $\widehat \sigma _{\sf{W}\text{, dB}}^2=10\log_{10}(\widehat \sigma_{\sf{W}}^2)$  and ${\rho _{\text{dB}}} = 10\log_{10}(\rho)$ represent the nominal noise power and the degree of noise uncertainty in dB scale. The probability density function (PDF) of $\sigma_{\sf W}^2$ for the bounded uncertainty model is given by
\begin{equation}\label{eq:np_pdf}
    {f_{\sigma _{\sf{W}}^2}} = \left\{ {\begin{array}{*{20}{l}}
  {\frac{1}{{2\ln \left( \rho  \right)x}},}&{{\text{if }}\frac{1}{\rho }\widehat \sigma _{\text{W}}^2 \leqslant x \leqslant \rho \widehat \sigma _{\sf{W}}^2}, \\ 
  {0,}&{{\text{otherwise}}}. 
\end{array}} \right.
\end{equation}



\subsection{Channel Model}
Let ${\mathbf{G}} \in {\mathbb{C}^{N \times M_{\sf A}}}$ denote the channel matrix between Alice and the XL-RIS, ${\mathbf{H}} \in {\mathbb{C}^{{M_{\sf{W}}} \times N}}$ denote the channel matrix between XL-RIS and Bob, and ${\mathbf{F}} \in {\mathbb{C}^{{M_{\sf{W}}} \times N}}$ denote the channel matrix between XL-RIS and Willie. The phase-shift matrix of the XL-RIS is defined as a diagonal matrix ${\mathbf{\Theta }} = {\sf{diag}}\left( {{e^{\jmath{\theta _1}}}, \ldots , {e^{\jmath{\theta _N}}}} \right)$, where $\theta_n \in [0, 2\pi)$ is the phase shift of the $n$-th reflection element.

The planar-wave-based far-field channel model is employed to characterize the communication channel between Alice and XL-RIS. Based on the Saleh-Valenzuela (SV) channel model, the Alice-to-XL-RIS channel $\mathbf{G}$ can be given as
\begin{equation}
    \mathbf{G} = \sqrt {\frac{{M_{\sf A}N{\chi _{{\sf{ar}}}}}}{{{L_{\sf{p}}}}}} \sum\limits_{i = 1}^{{L_{\sf{p}}}} {{\alpha _i}{{\mathbf{a}}_{{\sf{UPA}}}}\left( {\vartheta _{{\sf{ar,r}}}^i,\phi _{{\sf{ar,r}}}^i} \right)\mathbf{a}_{{\sf{ULA}}}^{\sf H}\left( {\vartheta _{{\sf{ar,t}}}^i} \right)},
\end{equation}
where $L_{\sf p}$ is the total number of resolvable signal paths, ${\chi _{{\sf{ar}}}}$ represents the average path loss, $\alpha_i$ is the complex channel gain of the $i$-th path, $\vartheta_{\sf ar,r}^i$, and $\phi_{\sf ar,r}^i$ denote the azimuth and elevation angles of arrival (AoAs) associated with the XL-RIS, ${\vartheta _{{\sf{ar,t}}}^i}$ represents the angle of departure (AoD) associated with Alice, and $\mathbf{a}_{\sf UPA}$ and $\mathbf{a}_{\sf ULA}$ are the normalized array response vectors associated with the UPA and the uniform linear array (ULA), respectively. Specifically, the normalized array response vector for ULA with $M$ elements is given by
\begin{equation}
\begin{aligned}
         &\mathbf{a}_{\sf ULA} \left( \vartheta\right)  =\\ &  \frac{1}{\sqrt{M}} \left[1,\ldots,e^{\jmath\frac{2\pi d}{\lambda}(m-1)\sin(\vartheta)},\ldots, e^{\jmath\frac{2\pi d}{\lambda}(M-1)\sin(\vartheta)} \right]^{\sf T},
\end{aligned}
\end{equation}
where $d$ is the antenna spacing. For UPA, the normalized array response vector with $N=N_{\sf y}\times N_{\sf z}$ elements is given by
\begin{equation}
\begin{aligned}
    &   \mathbf{a}_{\sf UPA} \left( \vartheta,\phi\right)  = \\  & \frac{1}{\sqrt{N}}  [1,e^{\jmath\frac{2\pi d}{\lambda}\sin(\vartheta)\cos(\phi)}, \ldots, e^{\jmath\frac{2\pi d}{\lambda}(N_{\sf y}-1)\sin(\vartheta)\cos(\phi) } ]^{\sf T} 
    \\ & ~~~~\otimes [1,e^{\jmath\frac{2\pi d}{\lambda}\sin(\phi)}, \ldots, e^{\jmath\frac{2\pi d}{\lambda}(N_{\sf z}-1)\sin(\phi) } ]^{\sf T}.
\end{aligned}
\end{equation} 


The wavefront of electromagnetic waves in the near-field region is modeled as spherical waves. A three-dimensional Cartesian coordinate system is considered, where the XL-RIS is deployed on the $yz$-plane with its center located at the origin of coordinate $(0,0,0)$. Therefore, the element in $n_1$-th column and $n_2$-th row of the XL-RIS is located as $(0,n_{1}d,n_{2}d)$, where $n_1=\frac{-N_{\sf y}+1}{2},\ldots,\frac{N_{\sf y}-1}{2}$ and $n_2=\frac{-N_{\sf z}+1}{2},\ldots,\frac{N_{\sf z}-1}{2}$. Similarly, the coordinates of the $m_{\sf B}$-th element at Bob and $m_{\sf W}$-th element at Willie can be given as $(x_{\sf B},y_{\sf B}+{\tilde{m}_{\sf B}}d,0)$ and $(x_{\sf W},y_{\sf W}+{\tilde{m}_{\sf W}}d,0)$, respectively, where ${\tilde{m}_{\sf B}}=m_{\sf B}-\frac{M_{\sf B}-1}{2}$ and ${\tilde{m}_{\sf W}}=m_{\sf W}-\frac{M_{\sf W}-1}{2}$. Since the XL-RIS is deployed near Bob, the line-of-sight (LoS) near-field channel between the XL-RIS and Bob is assumed, which can be modeled as 
\begin{equation}
    {\mathbf{H}} = {\left[ {{{\mathbf{h}}_1}, \ldots ,{{\mathbf{h}}_{{M_{\sf{B}}}}}} \right]^{\sf{T}}},
\end{equation}
where $ {{\mathbf{h}}_{{m_{\sf{B}}}}} = (1\big/\sqrt{N})\ \big[ {{\chi _{{m_{\sf{B}}},1}}{e^{ - \jmath \frac{{2\pi f}}{c}\left( {{r_{{m_{\sf{B}}},1}} ~- ~{r_{{m_{\sf{B}}}}}} \right)}}, ~ \ldots, } \\ {{\chi _{{m_{\sf{B}}},N}}{e^{ - \jmath \frac{{2\pi f}}{c}\left( {{r_{{m_{\sf{B}}},N}} - {r_{{m_{\sf{B}}}}}} \right)}}} \big]^{\sf{T}}$. More specifically, $r_{m_{\sf B},n}$ represents the distance between the $n$-th ($n = (n_1-1)N_{\sf y}+n_2$) element of the XL-RIS and the $m_{\sf B}$-th element of Bob, $r_{m_{\sf B}} $ denotes the reference distance from $(0,0,0)$ to  $(x_{\sf B},y_{\sf B}+{\tilde{m}_{\sf B}}d,0)$, and ${\chi _{{m_{\sf{B}}},n}} = \frac{c}{4\pi fr_{m_{\sf B},n}}$ is the free-space large-scale path loss between $n$-th element of XL-RIS and the $m_{\sf B}$-th element of Bob. The near-field channel between the XL-RIS and Willie $\mathbf{F}$ can be obtained by following the similar channel model.


Moreover, we assume that Willie’s perfect channel state information (CSI) is available at Alice. We make this assumption based on the following reasons. First, from the practical perspective, our proposed algorithm can be applied in two typical situations: i) when Willie is an unlicensed user, Willie usually uses a radiometer to detect the presence of transmission between Alice and Bob. It is known that almost all practical radiometers are superheterodyne receivers whose signal leakage is unavoidable~\cite{chaman2018ghostbuster}. Hence, as proposed in~\cite{wang2021intelligent}, Alice can obtain Willie-involved CSI through some advanced detecting equipment like “Ghostbuster” introduced in~\cite{chaman2018ghostbuster}. ii) when Willie is a licensed user, however, Alice suspects Willie as a malicious user during the current transmission to Bob~\cite{wang2021intelligent,zhou2019joint}. In this situation, Willie’s perfect CSI is available at Alice. Second, from the theoretical perspective, almost all literature on RIS-assisted covert communications like~\cite{chen2021enhancing,zhou2021intelligent,si2021covert} assumes that Willie-involved channels can be obtained, which can provide an upper bound on the achievable covert rate performance. It is worth noting that the method proposed in this paper can be extended to the scenarios of imperfect or partial Willie CSI available at Alice, by utilizing existing methods such as $\mathcal{S}$-procedure~\cite{wang2021Energy}, Bernstein-type inequality~\cite{sun2022outage} and triangle inequality~\cite{zhou2021intelligent}. The scenarios of imperfect or statistical Willie CSI at Alice are beyond the scope of this work, due to limited space.

\subsection{Signal Model}

Assuming that Alice transmits $L$ data streams to Bob with $M_{\sf A}^{\sf{RF}}$ ($L \leqslant M_{\sf A}^{\sf{RF}} \leqslant M_{\sf A}$) transmit RF chains, the transmitted signal at Alice with hybrid beamforming can be expressed as 
\begin{equation}
    {\mathbf{x}} = {{\mathbf{W}}_{{\text{RF}}}}{{\mathbf{W}}_{{\text{BB}}}}{\mathbf{s}},
\end{equation}
where ${{\mathbf{W}}_{{\sf{RF}}}} \in {\mathbb{C}^{M \times M_{\sf A}^{\sf{RF}}}}$ denotes the analog precoding matrix, ${{\mathbf{W}}_{{\sf{BB}}}} \in {\mathbb{C}^{{M_{\sf{A}}^{\sf{RF}}} \times L}}$ denotes the baseband digital precoding matrix, and $\mathbf{s}\in \mathbb{C}^{L\times 1}$ with ${\mathbb{E}\left( {{\mathbf{s}}{{\mathbf{s}}^{\sf H}}} \right) = {{\mathbf{I}}_L}}$ represents the data transmitted to Bob. The $i$-th row and the $j$-th column element of ${{\mathbf{W}}_{{\sf{RF}}}}$, denoted by $w_{{\sf{RF}}}^{\left( {i,j} \right)}$, satisfies the following unit-modulus constraint
\begin{equation}
    w_{{\sf{RF}}}^{\left( {i,j} \right)} \in \mathcal{F} \triangleq \left\{ {{e^{\jmath\varphi }}|\varphi  \in \left( {0,2\pi } \right]} \right\}.
\end{equation}
The received signal at Bob with fully digital combiner is
\begin{equation}
    {{\mathbf{y}}_{\sf{B}}} = {\mathbf{H\Theta G}}{{\mathbf{W}}_{{\sf{RF}}}}{{\mathbf{W}}_{{\sf{BB}}}}{\mathbf{s}} + {{\mathbf{n}}_{\sf{B}}},
\end{equation}
where ${{\mathbf{n}}_{\sf{B}}}(i) \sim \mathcal{C}\mathcal{N}\left( {{\mathbf{0}},\sigma _{\sf{B}}^2{{\mathbf{I}}_{{M_{\sf{B}}}}}} \right)$ is the AWGN at Bob. Accordingly, the mutual information between Alice and Bob is 
\begin{equation} 
\begin{aligned}
    &C_{\sf{B}}\left({{\bf{W}}_{{\sf{RF}}}},{{\bf{W}}_{{\sf{BB}}}},{\bf{\Theta }} \right) = \\ \label{eq:capacity}
    &{\log _2}\left| {{{\mathbf{I}}_{{M_{\sf{B}}}}} + \sigma _{\sf{B}}^{ - 2}{\mathbf{H\Theta G}}{{\mathbf{W}}_{{\sf{RF}}}}{{\mathbf{W}}_{{\sf{BB}}}}{{\mathbf{W}}_{{\sf{BB}}}^{\sf H}}{{\mathbf{W}}_{{\sf{RF}}}^{\sf H}}{{\mathbf{G}}^{\sf H}}{{\mathbf{\Theta }}^{\sf H}}{{\mathbf{H}}^{\sf H}}} \right|.
\end{aligned}    
\end{equation}
It is notice that the Bob's noise variance is assumed to be a constant $\sigma _{\sf{B}}^{2}$ in~\eqref{eq:capacity}. This simplified assumption does not affect the beamforming optimization design in latter Section~\ref{Sub:cov_comm_deign}, as Bob's noise power only appears in the objective function of the formulated covert-rate maximization problem~\eqref{eq:overall_pro}.

\section{Detection Performance Analysis at Willie}\label{Sub:Det_per_anay}
In the considered covert communication system, Willie attempts to detect the presence of transmission from Alice by conducting a binary hypothesis test. This section presents Willie's detection method and analyzes its DEP performance.
\subsection{Detection of Covert Communications at Willie}\label{subsec:DCCW}
Willie needs to discriminate between two hypotheses: the null hypothesis $\mathcal{H}_0$ that Alice remains silent, and the alternative hypothesis $\mathcal{H}_1$ that Alice is transmitting signals. We assume that Willie observes $K$ samples within a time slot. Considering these two hypotheses, the received signal at Willie for the $i$-th sample can be expressed as follows
\begin{equation}
    {{\mathbf{y}}_{\sf{W}}}(i) = \left\{ {\begin{array}{*{20}{l}}
  {{{\mathbf{n}}_{\sf{W}}}(i),}&{\text{if}~{\mathcal{H}_0}} \\ 
  {{\mathbf{F\Theta G}}{{\mathbf{W}}_{{\sf{RF}}}}{{\mathbf{W}}_{{\sf{BB}}}}{\mathbf{s}}(i) + {{\mathbf{n}}_{\sf{W}}}(i),}&{\text{if}~{\mathcal{H}_1}} 
\end{array}} \right.
\end{equation}
where  ${{\mathbf{n}}_{\sf{W}}}(i) \sim \mathcal{C}\mathcal{N}\left( {{\mathbf{0}},\sigma _{\sf{W}}^2{{\mathbf{I}}_{{M_{\sf{W}}}}}} \right)$ is the AWGN at Willie.

In covert communications, Willie's objective is to determine whether the observation samples $\left\{ \mathbf{y}_{\sf W} (i) \right\}_{i = 1}^K$ are purely background noises, denoted as $\mathcal{H}_0$, or the signals comprising both Alice's signals and the background noises, denoted as $\mathcal{H}_1$, with the minimum DEP. To this end, according to the Neyman-Pearson criterion, Willie's optimal test is the likelihood ratio test~\cite[Ch. 3.3]{kay1993fundamentals}, i.e., $\Upsilon \left( {\left\{ {{{\mathbf{y}}_{\sf W}}(i)} \right\}} \right) = \frac{{f\left( {\left\{ {{{\mathbf{y}}_{\sf W}}(i)} \right\}|{\mathcal{H}_1}} \right)}}{{f\left( {\left\{ {{{\mathbf{y}}_{\sf W}}(i)} \right\}|{\mathcal{H}_0}} \right)}}\mathop {\mathop  \gtrless \limits^{{\mathcal{D}_1}} }\limits_{{\mathcal{D}_0}} \gamma$, where $\mathcal{D}_0$ and $\mathcal{D}_1$ denote Willie's decision in favor of hypotheses  $\mathcal{H}_0$ and $\mathcal{H}_1$, and ${f\left( {\left\{ {{{\mathbf{y}}_{\sf W}}(i)} \right\}|{\mathcal{H}_0}} \right)}$ and ${f\left( {\left\{ {{{\mathbf{y}}_{\sf W}}(i)} \right\}|{\mathcal{H}_1}} \right)}$ are the likelihood functions of Willies's signal samples under hypothesis $\mathcal{H}_0$ and $\mathcal{H}_1$, respectively. However, in typical application scenarios, Willie does not have the instantaneous CSI of $\mathbf{G}$ and $\mathbf{F}$, making it difficult to derive the optimal detector. Therefore, we assume that Willie employs a radiometer detector~\cite{chen2021enhancing,zheng2019multi,he2018covert}, which does not require instantaneous CSI and thus is commonly used in practice. As a result, the decision rule for a radiometer detector at Willie is 
\begin{equation}\label{eq:Det_rule}
    {T_{\sf{W}}}  \mathop {\mathop  \gtrless \limits^{{\mathcal{D}_1}} }\limits_{{\mathcal{D}_0}} \Gamma,
\end{equation}
where ${T_{\sf{W}}} \triangleq \frac{1}{K}\sum\limits_{i = 1}^K {{{\left\| {{{\mathbf{y}}_{\text{W}}}(i)} \right\|}^2}}$ is the test statistic, which represents the averaged power received by Willie with equal gain combiner, and $\Gamma >0$ denotes the Willie's detection threshold. Following the widely adopted assumption in existing work on covert communications~\cite{bash2013limits,wang2021intelligent,si2021covert}, we consider that Willie exploits an infinite number of signal samples in a time slot to perform detection, i.e., $K \to \infty$. Consequently, the averaged received power  ${T_{\sf{W}}}$ at Willie is obtained as 
\begin{equation} \label{eq:RP_hyp}
    {T_{\sf{W}}} = \left\{ {\begin{array}{*{20}{l}}
  {{M_{\sf{W}}}\sigma _{\sf{W}}^2,}&{\text{if}~{\mathcal{H}_0}} \\ 
  {\left\| {{\mathbf{F\Theta G}}{{\mathbf{W}}_{{\sf{RF}}}}{{\mathbf{W}}_{{\sf{BB}}}}} \right\|_{\sf{F}}^2 + {M_{\sf{W}}}\sigma _{\sf{W}}^2,}&{\text{if}~{\mathcal{H}_1}} 
\end{array}} \right.
\end{equation}

Assuming equal prior probabilities for hypotheses $\mathcal{H}_0$ and $\mathcal{H}_1$, the performance of Willies's hypothesis test is measured by the DEP, denoted by $\xi$ ($0\leqslant \xi \leqslant 1$), which is defined as 
\begin{equation}
    \xi = P_{\sf{FA}} +P_{\sf{MD}}, 
\end{equation}
where $P_{\sf{FA}} = \Pr(\mathcal{D}_1|\mathcal{H}_0)$ is the false alarm probability, which represents the likelihood of making decision $\mathcal{D}_1$ when hypothesis $\mathcal{H}_0$ is true, similarly, $P_{\sf{MD}} = \Pr(\mathcal{D}_0|\mathcal{H}_1)$ is the missed detection probability,  which represents the likelihood of making decision $\mathcal{D}_0$ when hypothesis $\mathcal{H}_1$ is true. Specifically, $\xi=0$ implies that Willie can distinguish both hypotheses perfectly, while $\xi=1$ implies that Willie fails to make a correct detection and behaves like a random guess. 

\subsection{Analysis of DEP at Willie\label{subsec:ADW}}
In this subsection, the analytical expressions of the false alarm and missed detection probabilities at Willie are first derived, from which the optimal detection threshold $\Gamma^\star$ is obtained by employing the first-order optimality condition. We consider the worst-case scenario where Willie can acquire the $\Gamma^\star$ that minimizes the DEP~\cite{wang2021intelligent,si2021covert,lv2021covert}. 



To be specific, the false alarm probability is calculated as
\begin{equation}
    {P_{{\sf{FA}}}} = \Pr \left( {{T_{\sf{W}}} \geqslant \Gamma |\mathcal{H}{_0}} \right) = \Pr \left( {{M_{\sf{W}}}\sigma _{\sf{W}}^2 \geqslant \Gamma |\mathcal{H}{_0}} \right).
\end{equation}
Similarly, define the received power at Willie as $z = \left\| {{\mathbf{F\Theta G}}{{\mathbf{W}}_{{\sf{RF}}}}{{\mathbf{W}}_{{\sf{BB}}}}} \right\|_{\sf{F}}^2$ , the missed detection probability is thus
\begin{equation}
    \begin{aligned}
        {P_{{\sf{MD}}}} &= \Pr \left( {{T_{\sf{W}}} \leqslant \Gamma |\mathcal{H}{_1}} \right) \\ 
   &= \Pr \left( {z + {M_{\sf{W}}}\sigma _{\sf{W}}^2 \leqslant \Gamma |\mathcal{H}{_1}} \right). \\ 
    \end{aligned}
\end{equation}
Then, the DEP for arbitrary $\Gamma$ is given by
\begin{equation} \label{eq:DEP}
    \begin{aligned}
  \xi \left( \Gamma  \right) 
   &= \Pr \left( {{M_{\sf{W}}}\sigma _{\sf{W}}^2 \geqslant \Gamma } \right) + \Pr \left( {z + {M_{\sf{W}}}\sigma _{\sf{W}}^2 \leqslant \Gamma } \right) \\ 
   &= 1 - \Pr \left( {\frac{{\Gamma  - z}}{{{M_{\sf{W}}}}} < \sigma _{\sf{W}}^2 < \frac{\Gamma }{{{M_{\sf{W}}}}}} \right) \\ 
   &= 1 - \int_{\max \left( {\frac{{\Gamma  - z}}{{{M_{\sf{W}}}}},\frac{1}{\rho }\widehat \sigma _{\sf{W}}^2} \right)}^{\min \left( {\frac{\Gamma }{{{M_{\sf{W}}}}},\rho \widehat \sigma _{\sf{W}}^2} \right)} {{f_{\sigma _{\sf{W}}^2}}\left( x \right)dx}.  \\ 
    \end{aligned} 
\end{equation}
Clearly, $\min \left( {\frac{\Gamma }{{{M_{\sf{W}}}}},\rho \widehat \sigma _{\sf{W}}^2} \right) = \frac{\Gamma }{{{M_{\sf{W}}}}}$, which is proved by contradiction as follows. We assume that $\frac{\Gamma }{{{M_{\sf{W}}}}} >\rho \widehat  \sigma _{\sf{W}}^2$, i.e., $\Gamma>M_{\sf W}\rho \widehat \sigma _{\sf{W}}^2$. However, according to the PDF of the bounded uncertainty model in~\eqref{eq:np_pdf} and the test statistic~\eqref{eq:RP_hyp}, the maximum averaged received power under hypothesis $\mathcal{H}_0$ is $M_{\sf W}\rho \widehat \sigma _{\sf{W}}^2$. In covert communications, it is necessary to control the leakage power ${\left\| {{\mathbf{F\Theta G}}{{\mathbf{W}}_{{\sf{RF}}}}{{\mathbf{W}}_{{\sf{BB}}}}} \right\|_{\sf{F}}^2}$ within a certain range to ensure the probability distributions of $T_{\sf W}$ under two hypotheses overlap with each other. Otherwise, the DEP $\xi$ equals zero. Therefore, it follows that $\Gamma^\star$ cannot exceed $M_{\sf W}\rho \widehat \sigma _{\sf{W}}^2$, i.e., $\frac{\Gamma }{{{M_{\sf{W}}}}} <\rho \widehat  \sigma _{\sf{W}}^2$. 

With the PDF of noise uncertainty model given in~\eqref{eq:np_pdf},  taking the differentiation~\eqref{eq:DEP} with respect to $\Gamma$ yields
\begin{equation}
    \frac{{\partial \xi \left( \Gamma  \right)}}{{\partial \Gamma }} = \left\{ {\begin{array}{*{20}{l}}
  { - \frac{{{M_{\sf{W}}}}}{{2\ln \left( \rho  \right)\Gamma }},}&{{\text{if  }}\Gamma  \leqslant \frac{{{M_{\sf{W}}}}}{\rho }\widehat \sigma _{\sf{W}}^2 + z}, \\ 
  { - \frac{1}{{2\ln \left( \rho  \right)}}\left( {\frac{{{M_{\sf{W}}}}}{\Gamma } - \frac{{{M_{\sf{W}}}}}{{\Gamma  - z}}} \right),}&{{\text{if  }}\Gamma  > \frac{{{M_{\sf{W}}}}}{\rho }\widehat \sigma _{\text{W}}^2 + z}. 
\end{array}} \right.
\end{equation}
It follows that  $\frac{{\partial \xi \left( \Gamma  \right)}}{{\partial \Gamma }}\leqslant 0$ when $\Gamma  \leqslant \frac{{{M_{\sf{W}}}}}{\rho }\widehat \sigma _{\sf{W}}^2 + z$, and  $\frac{{\partial \xi \left( \Gamma  \right)}}{{\partial \Gamma }}> 0$ when   $\Gamma  > \frac{{{M_{\sf{W}}}}}{\rho }\widehat \sigma _{\text{W}}^2 + z$. Hence, the optimal detection threshold $\Gamma^\star$ is derived as
\begin{equation}
    \Gamma^\star = \min \left( {\frac{{{M_{\sf{W}}}}}{\rho }\widehat \sigma _{\sf{W}}^2 + z,\rho {M_{\sf{W}}}\widehat \sigma _{\sf{W}}^2} \right).
\end{equation}
Substituting $\Gamma^\star$ into \eqref{eq:DEP} yields the minimum DEP $\xi^\star$ at Willie
\begin{equation}
    \begin{aligned}
  & {\xi ^\star} = 1 - \int_{\frac{1}{\rho }\widehat \sigma _{\sf{W}}^2}^{{\Gamma ^ \star }} {{f_{\sigma _{\sf{W}}^2}}\left( x \right)dx}  \\ 
   &= \left\{ {\begin{array}{*{20}{l}}
  {1 - \frac{1}{{2\ln \left( \rho  \right)}}\ln \left( {1 + \frac{{\rho z}}{{{M_{\sf{W}}}\widehat \sigma _{\sf{W}}^2}}} \right),}&{{\text{if  }}z \leqslant {M_{\sf{W}}}\widehat \sigma _{\sf{W}}^2\left( {\rho  - \frac{1}{\rho }} \right)}, \\ 
  {0,}&{{\text{if  }}z > {M_{\sf{W}}}\widehat \sigma _{\sf{W}}^2\left( {\rho  - \frac{1}{\rho }} \right)}. 
\end{array}} \right. \\ 
\end{aligned} 
\end{equation}
When $z > {M_{\sf{W}}}\widehat \sigma _{\sf{W}}^2\left( {\rho  - \frac{1}{\rho }} \right)$, the DEP equals zeros, which implies Willie can distinguish the two hypotheses perfectly. Hence, in this paper, we focus on the following scenario
\begin{equation}\label{eq:z_range1}
  z \leqslant {M_{\sf{W}}}\widehat \sigma _{\sf{W}}^2\left( {\rho  - \frac{1}{\rho }} \right).  
\end{equation}
Given a predefined small value $\kappa\in [0,1]$ that specifies the covertness level, the covertness requirement can be given as $\xi\geqslant 1-\kappa$. Hence, $z$ should satisfy the following condition
\begin{equation}\label{eq:z_range2}
    z \leqslant \frac{{\left( {{e^{2\kappa \ln \left( \rho  \right)}} - 1} \right){M_{\text{W}}}\widehat \sigma _{\text{W}}^2}}{\rho }.
\end{equation}
From \eqref{eq:z_range1} and \eqref{eq:z_range2}, under the covertness requirement, the maximally allowed leakage power at Willie is obtained as 
\begin{equation}\label{eq:pleak}
    p_{\sf leak}=\min \left( {{M_{\sf{W}}}\widehat \sigma _{\sf{W}}^2\left( {\rho  - \frac{1}{\rho }} \right),\frac{{\left( {{e^{2\kappa \ln \left( \rho  \right)}} - 1} \right){M_{\sf{W}}}\widehat \sigma _{\sf{W}}^2}}{\rho }} \right).
\end{equation}
It is worth noting that the value range of $\kappa$ is typically small, being less than 0.5. According to the bounded uncertainty model in \eqref{eq:np_pdf}, the degree of noise uncertainty $\rho$ is greater than 1. It can be proved that the first term of \eqref{eq:pleak} is always greater than the second term. This indicates that the covertness requirement will act as an active constraint on the leakage power. According to the given conditions, it can be observed that within a certain range, $p_{\sf leak}$ is a monotonically increasing function with respect to $\kappa$. In other words, when the level of covert requirement is higher, the power that can be leaked to Willie will decrease. Hence, the covertness constraint for system design is obtained as follows
\begin{equation}\label{eq:covert_req}
    {\left\| {{\mathbf{F\Theta G}}{{\mathbf{W}}_{{\sf{RF}}}}{{\mathbf{W}}_{{\sf{BB}}}}} \right\|_{\sf{F}}^2} \leqslant p_{\sf leak}.
\end{equation}

\section{Joint Beamforming Design for Covert Communication}\label{Sub:cov_comm_deign}
In this section, an optimization problem is first formulated to maximize the achievable covert rate in the near-field region. Then, an efficient AO-based algorithm is designed to solve the formulated problem. Due to the extremely large-scale antenna array size of the XL-RIS, traditional semidefinite relaxation (SDR)-based optimization method~\cite{luo2010semidefinite}, which requires enormous computational cost, is no longer applicable for XL-RIS's phase coefficient optimization. Therefore, a low-complexity ADMM-based algorithm is proposed for XL-RIS optimization. Finally, the convergence and computational complexity are analyzed for the proposed algorithm.

\subsection{Problem Formulation}
In this paper, we aim to maximize the achievable covert rate by jointly optimizing Alice's hybrid beamforming and XL-RIS's reflection coefficient matrix. Based on the achievable covert rate given in~\eqref{eq:capacity} and covert constraint derived in~\eqref{eq:covert_req}, the optimization problem can be formulated as 

\begin{maxi!}[2]
{{{\bf{W}}_{{\sf{RF}}}}{\bf{,}}{{\bf{W}}_{{\sf{BB}}}},{\bf{\Theta }}} {C_{\sf{B}}\left( {{\bf{W}}_{{\sf{RF}}}}{\bf{,}}{{\bf{W}}_{{\sf{BB}}}},{\bf{\Theta }}\right)}{\label{eq:overall_pro}}{}
\addConstraint{\left\| {{{\bf{W}}_{{\sf{RF}}}}{{\bf{W}}_{{\sf{BB}}}}} \right\|_{\sf{F}}^2 }{\leqslant {P_{\max }}\label{eq:OP_MPC}}
\addConstraint{\left\| {{\bf{F\Theta G}}{{\bf{W}}_{{\sf{RF}}}}{{\bf{W}}_{{\sf{BB}}}}} \right\|_{\sf{F}}^2 }{\leqslant p_{\sf leak} \label{eq:OP_CRC}}
\addConstraint{\left| {{\theta _n}} \right| = 1,\forall n = 1, \ldots ,N}{\label{eq:OP_RISC}}
\addConstraint{w_{{\sf{RF}}}^{\left( {i,j} \right)} \in {\cal F}, }{\forall i=1,\ldots,M_{\sf A}, \forall j=1,\ldots {M_{\sf{A}}^{\sf RF}\label{eq:OP_ABC}}} 
\end{maxi!} 
where \eqref{eq:OP_MPC} is Alice's total power constraint with the maximum transmit power $P_{\max}$, \eqref{eq:OP_CRC} is the covertness requirement constraint with $p_{\sf leak}$ given in~\eqref{eq:pleak}, \eqref{eq:OP_RISC} is the unit-modulus constraints for XL-RIS's reflection coefficients, and \eqref{eq:OP_ABC} is unit-modulus constraints for Alice's analog beamformer.

The formulated problem is challenging to solve due to the highly coupled optimization variables $\mathbf{W}_{\sf RF}$, $\mathbf{W}_{\sf BB}$ and $\bf{\Theta}$ in the objective function and the constraints~\eqref{eq:OP_MPC} and~\eqref{eq:OP_CRC}, as well as the large-scale intractable unit-modulus constraints~\eqref{eq:OP_RISC}. To solve this problem, we propose to optimize the hybrid beamforming at Alice and the reflection coefficient matrix at XL-RIS alternatively until the process is converged. 


\subsection{Hybrid Beamforming Design at Alice}
Let ${{\bf{H}}_{\sf{B}}} = {\bf{H\Theta G}}$ and ${{\bf{H}}_{\sf{W}}} = {\bf{F\Theta G}}$ denote the equivalent channel from Alice to Bob and from Alice to Willie, respectively. For given reflection coefficient matrix $\bf{\Theta}$, the sub-problem of designing hybrid beamforming at Alice can be formulated as follows
\begin{maxi!}[2]
    {{{\bf{W}}_{{\rm{RF}}}},{{\bf{W}}_{{\rm{BB}}}}}{\log_2 \left|  {{{\bf{I}}_{{M_{\sf{B}}}}} \!+\! \sigma _{\sf{B}}^{ - 2}{{\bf{H}}_{\sf{B}}}{{\bf{W}}_{{\sf{RF}}}}{{\bf{W}}_{{\sf{BB}}}}{\bf{W}}_{{\rm{BB}}}^{\sf H}{{\bf{W}}_{{\sf{RF}}}^{\sf H}}{\bf{H}}_{\sf{B}}^{\sf H}}  \right|}{\label{eq:HB_beamforing_subp}}{}
    \addConstraint{\left\| {{{\bf{H}}_{\sf{W}}}{{\bf{W}}_{{\rm{RF}}}}{{\bf{W}}_{{\rm{BB}}}}} \right\|_{\rm{F}}^2 \leqslant p_{\sf leak}}{}
    \addConstraint{\eqref{eq:OP_MPC}, \eqref{eq:OP_ABC}.}{}
\end{maxi!}

To solve the subproblem~\eqref{eq:HB_beamforing_subp}, a two-stage algorithm is proposed for designing the hybrid beamformers. Specifically, the widely adopted WMMSE algorithm is first utilized to design the fully digital beamformer. Then, the digital baseband precoder is derived in closed form, and the analog phase shifts precoder is optimized by the MO algorithm. 
\subsubsection{Stage I: Fully-Digital Beamformer Design} The fully digital beamforming is first designed, which provides an upper bound on the rate performance for the considered hybrid beamforming system. Accordingly, the hybrid beamforming subproblem ~\eqref{eq:HB_beamforing_subp} is reformulated as 
\begin{maxi!}[2]
    {{\mathbf{W}_{\sf FD}}}{ \log_2  \left| {{{\mathbf{I}}_{{M_{\sf{B}}}}} + \sigma _{\sf{B}}^{ - 2}{{\mathbf{H}}_{\sf{B}}}{\mathbf{W}_{\sf FD}}{{\mathbf{W}_{\sf FD}^{\sf H}}}{\mathbf{H}}_{\sf{B}}^{\sf H}} \right|}{\label{eq:stageI}}{}
   \addConstraint{\left\| {\mathbf{W}_{\sf FD}} \right\|_{\sf{F}}^2 \leqslant {P_{\max }}}{\label{eq:stageI_PC}}
   \addConstraint{\left\| {{{\mathbf{H}}_{\sf{W}}}{\mathbf{W}_{\sf FD}}} \right\|_{\text{F}}^2 \leqslant p_{\sf leak}.}{\label{eq:stageI_CC}}
\end{maxi!}

A WMMSE-based algorithm is designed to solve this subproblem. Its main idea is to transform the original problem into a more tractable form by leveraging the equivalence between the rate maximization problem and the mean-square-error (MSE) minimization problem~\cite{shi2011iteratively}. 

Specifically, the signal vector $\widetilde{\mathbf{s}}$ at Bob is estimated by an introduced linear receive beamforming matrix $\mathbf{U}\in\mathbb{C}^{M_{\sf B}\times L}$ as $\widetilde{\mathbf{s}} = \mathbf{U}^{\sf H}\mathbf{y}_{\sf B}$. 
Then, the MSE matrix at Bob can be written as 
\begin{equation} \label{eq:MSE}
\begin{aligned}
 & \mathbf{E} = {\mathbb{E}_{{\mathbf{s}},{{\mathbf{n}}_{\sf{B}}}}}\left[ {\left( {{{\widetilde {\mathbf{s}}}} - {{\mathbf{s}}}} \right){{\left( {{{\widetilde {\mathbf{s}}}} - {{\mathbf{s}}}} \right)}^{\sf H}}} \right] \\ 
  & = \left( {{{\mathbf{I}}_{{M_{\sf{B}}}}} - {{\mathbf{U}}^{\sf{H}}}{{\mathbf{H}}_{\sf{B}}}{\mathbf{W}_{\sf FD}}} \right){\left( {{{\mathbf{I}}_{{M_{\sf{B}}}}} - {{\mathbf{U}}^{\sf{H}}}{{\mathbf{H}}_{\sf{B}}}{\mathbf{W}_{\sf FD}}} \right)^{\sf H}} + \sigma _{\sf{B}}^2{{\mathbf{U}}^{\sf{H}}}{\mathbf{U}}. \\ 
    \end{aligned} 
\end{equation} 
By introducing a weight matrix ${\mathbf{\Psi }} \succcurlyeq {\mathbf{0}}$ for Bob, the subproblem~\eqref{eq:stageI} can be equivalently reformulated as~\cite[Thm. 1]{shi2011iteratively}
\begin{subequations} \label{eq:stageI_trans}
    \begin{align}
     &  \mathop {\min }\limits_{{\mathbf{\Psi }},{\mathbf{U}},{\mathbf{W}_{\sf FD}}} {\text{  Tr}}\left( {{\mathbf{\Psi E}}} \right) - \log_2  \left| {\mathbf{\Psi }} \right| \label{eq:stageI_trans_OB} \\
    & \quad  \text{s.t.}\quad\quad \eqref{eq:stageI_PC}, \eqref{eq:stageI_CC}. 
    \end{align}
\end{subequations}
Although the transformed problem has more optimization variables than~\eqref{eq:stageI}, the objective function in~\eqref{eq:stageI_trans} is more tractable. The receive beamforming matrix $\mathbf{U}$ and the weight matrix $\bf{\Psi}$ only appear in the objective function~\eqref{eq:stageI_trans_OB}. By setting the derivatives of ~\eqref{eq:stageI_trans_OB} with respective to $\mathbf{U}$ and  $\bf{\Psi}$ to zero, respectively, the optimal solutions can be obtained as 
\begin{equation} \label{eq:U_E}
\begin{aligned}
     \mathbf{U}^\star &= \left( {{\mathbf{H}_{\sf B}}\mathbf{W}_{\sf FD}\mathbf{W}_{\sf FD}^{\sf H}{\mathbf{H}_{\sf B}^{\sf H}} + \sigma _{\sf{B}}^2{{\mathbf{I}}_{{M_{\sf{B}}}}}} \right)^{-1}{{\mathbf{H}}_{\sf{B}}}{\mathbf{W}_{\sf FD}}, \\
     \bf{\Psi}^\star & = \mathbf{E}^{-1}.
\end{aligned}
\end{equation}
Substituting the optimal $\mathbf{U}^\star$ in~\eqref{eq:U_E} into~\eqref{eq:MSE} yields the optimal MSE matrix as follows 
\begin{equation}
\begin{aligned}
      & \mathbf{E}^\star  = \\ & {{\mathbf{I}}_{{M_{\sf{B}}}}} - {\mathbf{W}_{\sf FD}^{\sf H}}\mathbf{H}_{\sf{B}}^{\sf H}\left( {{\mathbf{H}_{\sf B}}\mathbf{W}_{\sf FD}\mathbf{W}_{\sf FD}^{\sf H}{\mathbf{H}_{\sf B}^{\sf H}} + \sigma _{\sf{B}}^2{{\mathbf{I}}_{{M_{\sf{B}}}}}} \right)^{-1}{{\mathbf{H}}_{\sf{B}}}{\mathbf{W}_{\sf FD}}.
\end{aligned}
\end{equation}
 Substituting~\eqref{eq:MSE} into the objective function of \eqref{eq:stageI_trans} and discarding the constant terms, the problem that updates the full-digital beamforming matrix $\mathbf{W}_{\sf FD}$ is transformed as 
\begin{mini!} 
    {{\mathbf{W}_{\sf FD}}}{{\text{Tr}}\left( {{\mathbf{W}_{\sf FD}^{\sf H}}\mathbf{H}_{\sf{B}}^{\sf H}{\mathbf{U\Psi }}{{\mathbf{U}}^{\sf{H}}}{{\mathbf{H}}_{\sf{B}}}{\mathbf{W}_{\sf FD}}} \right) - {\text{Tr}}\left( {\mathbf{\Psi }}{\mathbf{W}_{\sf FD}^{\sf H}\mathbf{H}_{\sf{B}}^{\sf H}{\mathbf{U}}} \right)\nonumber}{\label{eq:WFD_subp}}{}
    \breakObjective{- {\text{Tr}}\left( {{\mathbf{\Psi }}{{\mathbf{U}}^{\sf{H}}}{{\mathbf{H}}_{\sf{B}}}{\mathbf{W}_{\sf FD}}} \right)}
    \addConstraint{\eqref{eq:stageI_PC}, \eqref{eq:stageI_CC}.}{}
\end{mini!}
Problem \eqref{eq:WFD_subp} is a convex optimization problem with respect to $\mathbf{W}_{\sf FD}$, which can be solved by the convex solver packages, such as CVX~\cite{cvx}. However, this method incurs high computational complexity. To mitigate this issue, the Lagrangian dual method is leveraged to derive a  semi-closed-form solution as follows. 

Given that Slater's condition holds for problem~\eqref{eq:WFD_subp}, the Karush-Kuhn-Tucker (KKT) conditions serve as both necessary and sufficient for optimality. By introducing the Lagrange multiplier $\mu\geqslant 0$ associated with the maximum power constraint~\eqref{eq:stageI_PC}, the partial Lagrangian function for problem~\eqref{eq:WFD_subp} is given by
\begin{equation}
    \begin{aligned}
 \mathcal{L}\left( \mathbf{W}_{\sf FD}, \mu \right) &={\text{Tr}}\left( {{\mathbf{W}_{\sf FD}^{\sf H}}\mathbf{H}_{\sf{B}}^{\sf H}{\mathbf{U\Psi }}{{\mathbf{U}}^{\sf{H}}}{{\mathbf{H}}_{\sf{B}}}{\mathbf{W}_{\sf FD}}} \right)  \\ 
 &- {\text{Tr}}\left( {\mathbf{\Psi }}{\mathbf{W}_{\sf FD}^{\sf H}\mathbf{H}_{\sf{B}}^{\sf H}{\mathbf{U}}} \right)  - {\text{Tr}}\left( {{\mathbf{\Psi }}{{\mathbf{U}}^{\sf{H}}}{{\mathbf{H}}_{\sf{B}}}{\mathbf{W}_{\sf FD}}} \right) 
 \\ & + \mu \left( {\left\| {\mathbf{W}_{\sf FD}} \right\|_{\sf{F}}^2 - {P_{\max }}} \right). 
\end{aligned} 
\end{equation}
The dual function can be obtained by solving the following problem
\begin{subequations} \label{eq:orgin_problem}
    \begin{align}
       & f\left( \mu  \right) = \mathop {\min }\limits_{\mathbf{W}_{\sf FD}} ~~ \mathcal{L}\left( {{\mathbf{W}_{\sf FD}},\mu } \right) \\
       & \quad\quad\quad ~~ \text{s.t.} ~~ \eqref{eq:stageI_CC}.
    \end{align}
\end{subequations}
Then, the corresponding dual problem is given as 
\begin{subequations} \label{eq:dual_problem}
    \begin{align}
       &  \mathop {\max }\limits_{\mu} \quad f\left( \mu  \right) \\
       &~~ \text{s.t.} \quad \mu \geqslant 0.
    \end{align}
\end{subequations}
To tackle the dual problem~\eqref{eq:dual_problem}, it is necessary to solve the original problem~\eqref{eq:orgin_problem} under a fixed value of $\mu$. Similarly, by introducing the Lagrange multiplier $\upsilon\geqslant 0$ associated with covertness requirement constraint~\eqref{eq:stageI_CC}, the Lagrangian function for problem~\eqref{eq:orgin_problem} is given by
\begin{equation}
\begin{aligned}
        \mathcal{L\left( {{\mathbf{W}_{\sf FD}},\upsilon } \right)} & = {\text{Tr}}\left( {{\mathbf{W}_{\sf FD}^{\sf H}}\left( {\mathbf{H}_{\sf{B}}^{\sf H}{\mathbf{U\Psi }}{{\mathbf{U}^{\sf{H}}}\mathbf{H}_{\sf{B}}} + \mu {\mathbf{I}}} \right){\mathbf{W}_{\sf FD}}} \right) 
        \\ &- {\text{Tr}}\left( {{\mathbf{\Psi }}{{\mathbf{W}_{\sf FD}^{\sf H}}}{\mathbf{H}_{\sf{B}}^{\sf H}}{\mathbf{U}}} \right) - {\text{Tr}}\left( {{\mathbf{\Psi }}{\mathbf{U}^{\sf{H}}}{{\mathbf{H}}_{\sf{B}}}{\mathbf{W}_{\sf FD}}} \right) \\
        & + \upsilon \left( {\left\| {{{\mathbf{H}}_{\sf{W}}}{\mathbf{W}_{\sf FD}}} \right\|_{\sf{F}}^2 - p_{\sf leak} } \right) - \mu {P_{\max }}.
\end{aligned}
\end{equation}
By letting the first-order derivative of $ \mathcal{L\left( {{\mathbf{W}_{\sf FD}},\upsilon } \right)}$ with respect to $\mathbf{W}_{\sf FD}^*$ be equal to $\mathbf{0}$, the optimal solution is obtained according to the first-order optimality condition as follows
\begin{equation} \label{eq:opt_WFD}
    {\mathbf{W}_{\sf FD}^ \star }\left( \upsilon,\bar \mu  \right) = {\left( {{\mathbf{H}}_{\sf{B}}^{\sf H}{\mathbf{U\Psi }}{{\mathbf{U}}^{\sf{H}}}{{\mathbf{H}}_{\sf{B}}} + \bar \mu {{\mathbf{I}}_{{M_{\sf{B}}}}} + \upsilon {\mathbf{H}}_{\sf{W}}^{\sf H}{{\mathbf{H}}_{\sf{W}}}} \right)^\dag }{\mathbf{H}}_{\sf{B}}^{\sf H}{\mathbf{U\Psi }}.
\end{equation}
where $\bar \mu$ is the value of $\mu$ in the last iteration.

For fixed $\mu$, the $\upsilon$ should be chosen to satisfy the following complementary slackness condition for constraint~\eqref{eq:stageI_CC}
\begin{equation}\label{eq:com_sla_upsilon}
   \upsilon \left( {\left\| {{{\mathbf{H}}_{\sf{W}}}{{\mathbf{W}_{\sf FD}^ \star }}\left( \upsilon, \bar \mu  \right)} \right\|_{\sf{F}}^2 - p_{\sf leak} } \right) = 0,
\end{equation}
From~\eqref{eq:com_sla_upsilon}, the optimal $\mathbf{W}_{\sf FD}$ is derived by considering the following two cases. 
\begin{enumerate}
    \item[a)] Case 1: $\upsilon=0$. If the condition $ {\left\| {{{\mathbf{H}}_{\sf{W}}}{{\mathbf{W}_{\sf FD}^ \star } }\left( 0,\bar\mu  \right)} \right\|_{\sf{F}}^2 - p_{\sf leak} }  \leqslant 0$ holds, then the optimal solution is obtained as ${{\mathbf{W}_{\sf FD}^ \star } }\left( 0,\bar\mu  \right)$.
    \item[b)] Case 2: $\upsilon\ne 0$. From~\eqref{eq:opt_WFD}, it can be easily verified that the received power at Willie $\left\| {{{\mathbf{H}}_{\sf{W}}}{{\mathbf{W}_{\sf FD}^ \star }}\left( \upsilon,\bar\mu  \right)} \right\|_{\sf{F}}^2$ is a monotonically decreasing function of $\upsilon$. Hence the bisection search can be used to find the dual variable $\upsilon$ that satisfies the condition~\eqref{eq:com_sla_upsilon}. 
\end{enumerate}

Once the optimal $\upsilon$ is obtained by the above procedure, the bisection search can be used again to find the dual variable $\mu$ that satisfies the following complementary slackness condition for constraint~\eqref{eq:stageI_PC}
\begin{equation}
     \mu \left( {\left\| {\mathbf{W}_{\sf FD}^\star\left(\bar \upsilon, \mu \right)} \right\|_{\sf{F}}^2 - {P_{\max }}} \right) = 0,
\end{equation}
where $\bar \upsilon$ is the value of $\upsilon$ in the last iteration. The steps for finding the optimal dual variables $\mu$ and $\upsilon$ via bisection search is given in Algorithm~\ref{alg1_1}.

Furthermore, the whole procedure for solving problem~\eqref{eq:WFD_subp} is summarized as Algorithm~\ref{alg1_2}. The sequence of $\mathbf{W}_{\sf FD}$ generated by the WMMSE algorithm converges to the KKT point of problem~\eqref{eq:stageI}, which has been verified in~\cite{pan2017joint,shi2015secure}.


\begin{algorithm}[t]
	\renewcommand{\algorithmicrequire}{\textbf{Input:}}
	\renewcommand{\algorithmicensure}{\textbf{Output:}}
	\caption{Bisection Search Method to Solve Problem \eqref{eq:WFD_subp}}
	\label{alg1_1}
	\begin{algorithmic}[1]
		\STATE Initialization: the convergence accuracy $\varepsilon$, the lower bound $\mu_{\sf l}$, $\upsilon_{\sf l}^0$ and upper bound $\mu_{\sf u}$, $\upsilon_{\sf u}^0$ for $\mu$ and $\upsilon$, respectively.
		\REPEAT
		\STATE Update $\bar \mu = (\mu_{\sf l}+\mu_{\sf u})/2$, $\upsilon_{\sf l} = \upsilon_{\sf l}^0$ and $\upsilon_{\sf u} = \upsilon_{\sf u}^0$.
        \IF{$ {\left\| {{{\mathbf{H}}_{\sf{W}}}{{\mathbf{W}_{\sf FD}^ \star } }\left( 0, \bar \mu  \right)} \right\|_{\sf{F}}^2 - p_{\sf leak} }  \leqslant 0$}
           \STATE         $\upsilon = 0$,
        \ELSE
            \REPEAT
            \STATE Update $\upsilon = (\upsilon_{\sf l}+\upsilon_{\sf u})/2$,
            \IF{$ {\left\| {{{\mathbf{H}}_{\sf{W}}}{{\mathbf{W}_{\sf FD}^ \star } }\left(\upsilon , \bar \mu  \right)} \right\|_{\sf{F}}^2 \leqslant  p_{\sf leak} }$}
            \STATE $\upsilon_{\sf l} = \upsilon$,
            \ELSE
             \STATE $\upsilon_{\sf u} = \upsilon$.
             \ENDIF
    		\UNTIL $|\upsilon_{\sf l} - \upsilon_{\sf u}|< \varepsilon$. 
      \STATE Update $\mathbf{W}_{\sf FD}^{(t)}$ in~\eqref{eq:opt_WFD}. 
       \ENDIF
       \IF{$ {\left\| \mathbf{W}_{\sf FD}^{(t)} \right\|_{\sf{F}}^2}  \leqslant {P_{\max }}$} 
       \STATE $\mu_{\sf l} =\bar \mu$,
       \ELSE
       \STATE  $\mu_{\sf u} =\bar  \mu$.
       \ENDIF
      \UNTIL $|\mu_{\sf l} - \mu_{\sf u}|< \varepsilon$, terminal. Otherwise, go to step 5.
	\end{algorithmic}  
\end{algorithm}

\begin{algorithm}[t]
	\renewcommand{\algorithmicrequire}{\textbf{Input:}}
	\renewcommand{\algorithmicensure}{\textbf{Output:}}
	\caption{WMMSE Method for Fully-Digital Beamformer Design}
	\label{alg1_2}
	\begin{algorithmic}[1]
		\STATE Initialization: the convergence accuracy $\varepsilon$, the fully-digital beamfomer $\mathbf{W}_{\sf FD}^{(0)}$,  the maximum number of iterations $t_{\max}$ and iteration index $t=1$,  calculate the objective function of~\eqref{eq:stageI} as $f\left(\mathbf{W}_{\sf FD}^{(0)} \right)$.
        \REPEAT 
        \STATE Update $\mathbf{U}$ and $\mathbf{\Psi}$ in~\eqref{eq:U_E}.
	\STATE	Update $\mathbf{W}_{\sf FD}$ by solving Problem~\eqref{eq:WFD_subp} using Algorithm~\ref{alg1_1}.
      \UNTIL $\frac{\left|f\left(\mathbf{W}_{\sf FD}^{(t)} \right) - f\left(\mathbf{W}_{\sf FD}^{(t-1)} \right)\right|}{f\left(\mathbf{W}_{\sf FD}^{(t-1)} \right)}<\varepsilon$ or $t \geqslant t_{max}$, terminate.
		\ENSURE  $\mathbf{W}_{\sf FD}^\star$
	\end{algorithmic} 
\end{algorithm}

\subsubsection{Stage II: Hybrid Beamformer Design}
Given the optimal fully digital beamformer $\mathbf{W}_{\sf FD}^\star$ obtained in stage I, we further consider the hybrid beamformer design due to the hardware constraints. To ensure that the inequality \eqref{eq:OP_CRC} still holds for hybrid architectures, we assume that the number of RF chains is greater than or equal to twice the number of data streams in this work, so that the beamforming matrices of the hybrid architecture can perfectly realize any fully digital beam matrix~\cite{sohrabi2016hybrid}. When the above assumption is not satisfied, we can preset a small margin $\delta$ so that it satisfies the inequality $\left\| {{{\mathbf{H}}_{\sf{W}}}{\mathbf{W}_{\sf FD}}} \right\|_{\text{F}}^2 \leqslant p_{\sf leak}- \delta$, which is more stringent compared with the inequality~\eqref{eq:stageI_CC}. Hence, the inequality~\eqref{eq:stageI_CC} can be guaranteed even if the fully digital beamforming matrix cannot be approximated perfectly. To approximate the performance of the fully digital beamformer, a hybrid beamformer design problem is formulated to obtain the near-optimal analog beamformer $\mathbf{W}_{\sf RF}$ and baseband digital beamformer $\mathbf{W}_{\sf BB}$ as follows
\begin{mini!}
{{\mathbf{W}_{{\sf{RF}}}},{{\mathbf{W}}_{{\sf{BB}}}}}{\left\| {{{\mathbf{W}}_{{\sf{FD}}}} - {{\mathbf{W}}_{{\sf{RF}}}}{{\mathbf{W}}_{{\sf{BB}}}}} \right\|_{\sf{F}}^2}{\label{eq:HBD}}{}
\addConstraint{\eqref{eq:OP_ABC}.}{}
\end{mini!}
Since the analog beamformer and the baseband digital beamformer are coupled with each other, an alternating minimization method is employed to optimize both beamformers.
\begin{itemize}
    \item \textit{Baseband Digital Beamfomer Design}: Given $\mathbf{W}_{\sf RF}$, the optimization problem~\eqref{eq:HBD} can be expressed as 
    \begin{mini!}
{{{\mathbf{W}}_{{\sf{BB}}}}}{\left\| {{{\mathbf{W}}_{{\sf{FD}}}} - {{\mathbf{W}}_{{\sf{RF}}}}{{\mathbf{W}}_{{\sf{BB}}}}} \right\|_{\sf{F}}^2}{}{}
\end{mini!}
which is a standard least-square problem. Thus, according to the first-order optimality conditions, the optimal $\mathbf{W}_{\sf BB}$ can be derived as
\begin{equation} \label{eq:WBB}
  {\mathbf{W}}_{{\sf{BB}}}^ \star  = {\left( {\mathbf{W}_{{\sf{RF}}}^{\sf H}{{\mathbf{W}}_{{\sf{RF}}}}} \right)^{-1}}\mathbf{W}_{\sf{RF}}^{\sf H}{{\mathbf{W}}_{\sf FD}}.
\end{equation}

\item \textit{Analog Beamformer Design}: Given $\mathbf{W}_{\sf BB}$, the optimization problem~\eqref{eq:HBD} is reduced as 
\begin{mini!}
{{\mathbf{W}_{{\sf{RF}}}}}{\left\| {{{\mathbf{W}}_{{\sf{FD}}}} - {{\mathbf{W}}_{{\sf{RF}}}}{{\mathbf{W}}_{{\sf{BB}}}}} \right\|_{\sf{F}}^2}{\label{eq:ABD}}{}
\addConstraint{\eqref{eq:OP_ABC}.}{}
\end{mini!}
which can be transformed into the equivalently form~\eqref{eq:ELF} shown at the top of the next page.
\begin{figure*}
\begin{mini!}
{{\mathbf{W}_{{\sf{RF}}}}}{e\left( {{\text{vec}}\left( {{{\mathbf{W}}_{{\sf{RF}}}}} \right)} \right) = \left\| {{\text{vec}}\left( {{{\mathbf{W}}_{{\sf{FD}}}}} \right) - \left( {{\mathbf{W}}_{{\sf{BB}}}^{\sf{T}} \otimes {{\mathbf{I}}_{{M_{\sf{A}}^{\sf RF}}}}} \right){\text{vec}}\left( {{{\mathbf{W}}_{{\sf{RF}}}}} \right)} \right\|^2 \label{eq:ELF_ob}}{\label{eq:ELF}}{}
\addConstraint{\eqref{eq:OP_ABC}.}{}
\end{mini!}
\end{figure*}

The objective function $e\left( {{\text{vec}}\left( {{{\mathbf{W}}_{{\sf{RF}}}}} \right)} \right)$ is continuous and differentiable with respect to $ {{\text{vec}}\left( {{{\mathbf{W}}_{{\sf{RF}}}}} \right)} $, and the constraint set $\mathcal{F}$ is a complex circle manifold. Hence, the MO algorithm~\cite{manopt} can be adopted to design the analog beamformer. In general, MO is mainly divided into three steps:
\begin{enumerate}
    \item[a)] \textit{Computation of Riemannian Gradient}: The Riemannian gradient ${\text{grad}}_{{{\text{vec}}\left( {{{\mathbf{W}}_{{\sf{RF}}}}} \right)}} e\left( {{\text{vec}}\left( {{{\mathbf{W}}_{{\sf{RF}}}}} \right)} \right)$, which is the orthogonal projection of the Euclidean gradient $\nabla_{{{\text{vec}}\left( {{{\mathbf{W}}_{{\sf{RF}}}}} \right)}} e\left( {{\text{vec}}\left( {{{\mathbf{W}}_{{\sf{RF}}}}} \right)} \right)$ of~\eqref{eq:ELF_ob} onto the tangent space, is given by
    \begin{equation}
    \begin{aligned}
      & {\text{grad}}_{{{\text{vec}}\left( {{{\mathbf{W}}_{{\sf{RF}}}}} \right)}} e \left( {{\text{vec}}\left( {{{\mathbf{W}}_{{\sf{RF}}}}} \right)} \right)   
     \\ & = \nabla_{{{\text{vec}}\left( {{{\mathbf{W}}_{{\sf{RF}}}}} \right)}} e\left( {{\text{vec}}\left( {{{\mathbf{W}}_{{\sf{RF}}}}} \right)} \right)  
     \\ & - \Re\{\nabla_{{{\text{vec}}\left( {{{\mathbf{W}}_{{\sf{RF}}}}} \right)}} e\left( {{\text{vec}}\left( {{{\mathbf{W}}_{{\sf{RF}}}}} \right)} \right) \circ {{\text{vec}}\left( {{{\mathbf{W}}_{{\sf{RF}}}}} \right)}^* \}  
     \\ & \qquad \circ  {{\text{vec}}\left( {{{\mathbf{W}}_{{\sf{RF}}}}} \right)},
    \end{aligned}
    \end{equation}
    where the $\nabla_{{{\text{vec}}\left( {{{\mathbf{W}}_{{\sf{RF}}}}} \right)}} e\left( {{\text{vec}}\left( {{{\mathbf{W}}_{{\sf{RF}}}}} \right)} \right)$ is given in~\eqref{eq:EG} shown at the top of the next page.

\begin{figure*}
\begin{equation} \label{eq:EG}
  {\nabla _{{\text{vec}}\left( {{{\mathbf{W}}_{{\sf{RF}}}}} \right)}}e\left( {{\text{vec}}\left( {{{\mathbf{W}}_{{\sf{RF}}}}} \right)} \right) = 2{\left( {{\mathbf{W}}_{{\sf{BB}}}^{\sf{T}} \otimes {{\mathbf{I}}_{{M_{\sf{A}}^{\sf RF}}}}} \right)^{\sf H}}\left( {{\mathbf{W}}_{{\sf{BB}}}^{\sf{T}} \otimes {{\mathbf{I}}_{{M_{\sf{A}}^{\sf RF}}}}} \right){\text{vec}}\left( {{{\mathbf{W}}_{{\sf{RF}}}}} \right) 
   - 2{\left( {{\mathbf{W}}_{{\sf{BB}}}^{\sf{T}} \otimes {{\mathbf{I}}_{{M_{\sf{A}}^{\sf RF}}}}} \right)^{\sf H}}{\text{vec}}\left( {{{\mathbf{W}}_{{\sf{FD}}}}} \right).
    \end{equation}
{\noindent} \rule[-10pt]{18cm}{0.05em}
\end{figure*}

\item[b)]\textit{ Finding Search Direction}: The search direction, which can be calculated by the tangent vector conjugated to ${\text{grad}}_{{{\text{vec}}\left( {{{\mathbf{W}}_{{\sf{RF}}}}} \right)}} e\left( {{\text{vec}}\left( {{{\mathbf{W}}_{{\sf{RF}}}}} \right)} \right)$, is given by
\begin{equation}
    \mathbf{d} = -{\text{grad}}_{{{\text{vec}}\left( {{{\mathbf{W}}_{{\sf{RF}}}}} \right)}} e\left( {{\text{vec}}\left( {{{\mathbf{W}}_{{\sf{RF}}}}} \right)} \right)+ \varpi \mathbb{T}\left( \bar{\mathbf{d}} \right), 
\end{equation}
where $\varpi$ is the conjugate gradient update parameter, $\bar{\mathbf{d}}$ is the search direction in the last iteration, and the vector transport function $
   \mathbb{T}\left( \mathbf{d} \right) = \bar{\mathbf{d}} - \Re\{ \mathbf{d} \circ {{\text{vec}}\left( {{{\mathbf{W}}_{{\sf{RF}}}}} \right)}^*  \} \circ {{\text{vec}}\left( {{{\mathbf{W}}_{{\sf{RF}}}}} \right)}$.
   
\item[c)] \textit{Retraction}: The process of retraction is to find the next solution by mapping the current point ${{\text{vec}}\left( {{{\mathbf{W}}_{{\sf{RF}}}}} \right)}$ on the tangent space back to the complex circle manifold. Mathematically, it is given by
\begin{equation}
    {{\text{vec}}\left( {{\mathbf{W}_{{\sf{RF}}}}} \right)}_n = \frac{\left( {{\text{vec}}\left( {{\mathbf{ \bar W}}_{{\sf{RF}}}}\right) + \tau\mathbf{d} } \right)_n}{\left| {{\text{vec}}\left( {{\mathbf{\bar W}_{{\sf{RF}}}}}\right) + \tau\mathbf{d}}  \right|_n},
\end{equation}
where ${\text{vec}}\left( {{\mathbf{\bar W}_{{\sf{RF}}}}}\right)$ is the solution in the last iteration, and $\tau$ is the step size updated by Armijo rule generally. 
\end{enumerate} 
\end{itemize}

The steps for hybrid beamformer design is summarized as Algorithm~\ref{alg2}, and ${\mathbf{\bar W}}_{{\sf{BB}}}$ is the solution in the last iteration.
\begin{algorithm}[t]
	\renewcommand{\algorithmicrequire}{\textbf{Input:}}
	\renewcommand{\algorithmicensure}{\textbf{Output:}}
	\caption{Alternating Minimization Method for Hybrid Beamformer Design}
	\label{alg2}
	\begin{algorithmic}[1]
		\STATE \algorithmicrequire ~ the fully-digital beamfomer $\mathbf{W}_{\sf FD}$.
            \STATE Initialization: the convergence accuracy $\varepsilon$, and iteration index $t=1$.
		\REPEAT
		\STATE Update $\mathbf{W}_{\sf BB}$ in~\eqref{eq:WBB},
            \STATE Update $\mathbf{W}_{\sf RF}$ by solving problem~\eqref{eq:ABD} by MO algorithm. 
      \UNTIL $\frac{\left\| {{{\mathbf{W}}_{{\sf{FD}}}^{(t)}} - {{\mathbf{W}}_{{\sf{RF}}}^{(t)}}{{\mathbf{W}}_{{\sf{BB}}}^{(t)}}} \right\|_{\sf{F}}^2}{\left\| {{{\mathbf{W}}_{{\sf{FD}}}^{(t-1)}} - {{\mathbf{W}}_{{\sf{RF}}}^{(t-1)}}{{\mathbf{W}}_{{\sf{BB}}}^{(t-1)}}} \right\|_{\sf{F}}^2} < \varepsilon$ or $t\geqslant t_{\max}$, terminal.
		\ENSURE  $\mathbf{W}_{\sf RF}^\star$, $\mathbf{W}_{\sf BB}^\star$
	\end{algorithmic}  
\end{algorithm}

\subsection{Reflection Coefficient Design at XL-RIS}
For fixed $\mathbf{W}_{\sf RF}$ and $\mathbf{W}_{\sf BB}$, define $\mathbf{\widehat W}_{\sf FD} \triangleq \mathbf{W}_{\sf RF}\mathbf{W}_{\sf BB}$. By substituting the MSE matrix $\mathbf{E}$ in~\eqref{eq:MSE}, ${{\bf{H}}_{\sf{B}}} = {\bf{H\Theta G}}$ and ${{\bf{H}}_{\sf{W}}} = {\bf{F\Theta G}}$ into~\eqref{eq:stageI_trans}, and discarding the terms that are independent of $\bf{\Theta}$, the reflection coefficient design problem can be reformulated as 
\begin{mini!}
{\mathbf{\Theta }}{ {\text{ Tr}}\left( {{{\mathbf{\Theta }}^{\sf{H}}}{{\mathbf{H}}^{\sf{H}}}{\mathbf{U\Psi }}{{\mathbf{U}}^{\sf{H}}}{\mathbf{H\Theta G}}\mathbf{\widehat W}_{\sf FD} \mathbf{\widehat W}_{\sf FD}^{\sf H}{{\mathbf{G}}^{\sf{H}}}} \right) \nonumber}{\label{eq:RCO}}{}
\breakObjective{- {\text{Tr}}\left( {{{\mathbf{\Theta }}^{\sf{H}}}{{\mathbf{H}}^{\sf{H}}}{\mathbf{U\Psi }}\mathbf{\widehat W}_{\sf FD}^{\sf H}{{\mathbf{G}}^{\sf{H}}}} \right)}
\addConstraint{{\text{Tr}}\left( {{{\mathbf{\Theta }}^{\sf{H}}}{{\mathbf{F}}^{\sf{H}}}{\mathbf{F\Theta G}}\mathbf{\widehat W}_{\sf FD} \mathbf{\widehat W}_{\sf FD}^{\sf H}{{\mathbf{G}}^{\sf{H}}}} \right) \leqslant p_{\sf leak}}{}
\addConstraint{\eqref{eq:OP_RISC}.}{}
\end{mini!}

For ease of presentation, define ${\mathbf{ A}} \triangleq {{\mathbf{H}}^{\sf{H}}}{\mathbf{U\Psi }}{{\mathbf{U}}^{\sf{H}}}{\mathbf{H}}$, ${\mathbf{{B}}} \triangleq {\mathbf{G}}\mathbf{\widehat W}_{\sf FD} \mathbf{\widehat W}_{\sf FD}^{\sf H}{{\mathbf{G}}^{\sf{H}}}$, ${\mathbf{C}} = {{\mathbf{H}}^{\sf{H}}}{\mathbf{U\Psi }}\mathbf{\widehat W}_{\sf FD}^{\sf H}{{\mathbf{G}}^{\sf{H}}}$, and ${\mathbf{\bar F}} = {{\mathbf{F}}^{\sf{H}}}{\mathbf{F}}$. The optimization problem~\eqref{eq:RCO} can be simplified as 
\begin{mini!}
{\mathbf{\Theta }}{{\text{ Tr}}\left( {{{\mathbf{\Theta }}^{\sf{H}}}\mathbf{A}{\mathbf{\Theta }}\mathbf{B}} \right) - {\text{Tr}}\left( \mathbf{\Theta}^{\sf H}\mathbf{C} \right) - {\text{Tr}}\left( \mathbf{C}^{\sf H}\mathbf{\Theta} \right) }{\label{eq:RCO2}}{}
\addConstraint{{\text{Tr}}\left( {{{\mathbf{\Theta }}^{\sf{H}}}\mathbf{\bar F}{\mathbf{\Theta}}\mathbf{B}} \right) \leqslant p_{\sf leak}}{}
\addConstraint{\eqref{eq:OP_RISC}.}{}
\end{mini!}
Since $\mathbf{\Theta}$ is a diagonal matrix, by exploiting the matrix identity in~\cite[Eq. (1.10.6)]{zhang2017matrix}, we have the following equivalence transformation
\begin{equation}
\begin{aligned}
        {\text{Tr}}\left( {{{\mathbf{\Theta }}^{\sf{H}}}{\mathbf{A\Theta {\mathbf B}}}} \right) &= {{\mathbf{v}}^{\sf{H}}}\left( {{\mathbf{A}} \circ {{\mathbf{{ B}}}^{\sf T}}} \right){\mathbf{v}}, \\
  {\text{Tr}}\left( {{{\mathbf{\Theta }}^{\sf{H}}}{\mathbf{C}}} \right) &= {{\mathbf{v}}^{\sf{H}}}{\mathbf{c}},\\
  {\text{Tr}}\left( {{{\mathbf{\Theta }}^{\sf{H}}}{\mathbf{\bar F\Theta { B}}}} \right) &= {{\mathbf{v}}^{\sf{H}}}\left( {{\mathbf{\bar F}} \circ {{\mathbf{{ B}}}^{\sf T}}} \right){\mathbf{v}},
\end{aligned}
\end{equation}
where the vector $\mathbf{c}$ is the diagonal elements of matrix $\mathbf{C}$, and ${\mathbf{v}} = {\left[ {{e^{\jmath{\theta _1}}}, \ldots , {e^{\jmath{\theta _N}}}} \right]^{\sf T}}$.

Define ${\mathbf{\Xi }} \triangleq \mathbf{A} \circ \mathbf{B}$ and $\mathbf{\Upsilon} \triangleq \mathbf{\bar F}\circ \mathbf{B}^{\sf T}$. The optimization problem~\eqref{eq:RCO2} can be further simplified as 
\begin{mini!}[2]
{\mathbf{v }}{{{\mathbf{v}}^{\sf{H}}}{\mathbf{\Xi v}} - 2\Re \left\{ {{{\mathbf{v}}^{\sf{H}}}{\mathbf{c}}} \right\} \label{eq:RCO3_ob}}{\label{eq:RCO3}}{}
\addConstraint{{{\mathbf{v}}^{\sf{H}}}\mathbf{\Upsilon} {\mathbf{v}} \leqslant p_{\sf leak}}{ \label{eq:RCO3_c1}}
\addConstraint{ \eqref{eq:OP_RISC}.}{}
\end{mini!}
It is noted that $\mathbf{A}$ can be verified to be a non-negative semidefinite matrix due to the fact that ${\mathbf{\Psi }} \succcurlyeq {\mathbf{0}}$. And according to the Schur product theorem~\cite[Ch. 6]{zhang2006schur}, $\mathbf{\Xi}$ is a non-negative semidefinite matrix. Similarly, $\mathbf{\Upsilon}$ is also a non-negative semidefinite matrix. Thus, the objective function of~\eqref{eq:RCO3} is a convex quadratic function with respect to $\mathbf{v}$, and \eqref{eq:RCO3_c1} is a convex quadratic constraint. However, the reflection coefficients are subject to unit-modulus constraints that are non-convex. 

In this paper, an ADMM-based algorithm is proposed to solve the non-convex optimization problem~\eqref{eq:RCO3}. The reasons for adopting ADMM are twofold. Firstly, the ADMM algorithm is efficient for solving optimization problems with large-scale variables, as it decomposes the original problem into smaller and more manageable subproblems. Due to the high computation complexity, SDR-based algorithm is not applicable here. Secondly, the AMDD algorithm is capable of handling the intractable unit-modulus constraints efficiently. 

Based on the ADMM algorithm, an auxiliary variable $\pmb{\varphi}=\left[\varphi_1,\dots,\varphi_N\right]^{\sf T}$ is introduced to transform the optimization problem~\eqref{eq:RCO3} as follows
\begin{mini!}
{\mathbf{v },\pmb{\varphi}}{{{\mathbf{v}}^{\sf{H}}}{\mathbf{\Xi v}} - 2\Re \left\{ {{{\mathbf{v}}^{\sf{H}}}{\mathbf{c}}} \right\}}{}{}
\addConstraint{{{\mathbf{v}}^{\sf{H}}}\mathbf{\Upsilon} {\mathbf{v}} \leqslant p_{\sf leak}}{}
\addConstraint{\left|\theta_n \right|\leqslant 1, \forall n=1,\dots N}{}
\addConstraint{\mathbf{v} = \pmb{\varphi}}{}
\addConstraint{\left|\varphi_n \right| = 1, \forall n=1,\dots N.}{}
\end{mini!}
Then, the augmented Lagrangian function is constructed as
\begin{equation}
    \mathcal{L}_\varrho(\mathbf{v },\pmb{\varphi},\pmb{\xi}) = {{\mathbf{v}}^{\sf{H}}}{\mathbf{\Xi v}} - 2\Re \left\{ {{{\mathbf{v}}^{\sf{H}}}{\mathbf{c}}} \right\} + \frac{\varrho }{2}{\left\| {{\mathbf{v}} - {\pmb{\varphi }} + {\pmb{\xi }}/\varrho } \right\|^2},
\end{equation}
where $\varrho>0$ is a predefined penalty parameter, and $\pmb{\xi}=\left[\xi_{1},\dots, \xi_{N} \right]^{\sf T}$ denotes the scaled dual variable vector for the constraint $\mathbf{v} = \pmb{\varphi}$. Then, the optimization problem is reformulated as
\begin{mini!}[2]
{\mathbf{v },\pmb{\varphi},\pmb{\xi}}{\mathcal{L}_\varrho(\mathbf{v },\pmb{\varphi},\pmb{\xi})}{}{}
\addConstraint{{{\mathbf{v}}^{\sf{H}}}\mathbf{\Upsilon} {\mathbf{v}} \leqslant p_{\sf leak}}{\label{eq:theta_opt_c1}}
\addConstraint{\left|\theta_n \right|\leqslant 1, \forall n=1,\dots N}{\label{eq:theta_opt_c2}}
\addConstraint{\left|\varphi_n \right| = 1, \forall n=1,\dots N.}{\label{eq:theta_opt_c3}}
\end{mini!}

As such, the optimization variables can update iteratively. 
\begin{enumerate}
    \item \textit{Update $\mathbf{v}$}: When $\pmb{\varphi}$ and $\pmb{\xi}$ are fixed, the subproblem for updating $\mathbf{v}$ is expressed as
    \begin{mini!}
        {\mathbf{v }}{\mathcal{L}_\varrho(\mathbf{v },\pmb{\varphi},\pmb{\xi})}{\label{eq:update_V}}{}
        \addConstraint{\eqref{eq:theta_opt_c1},\eqref{eq:theta_opt_c2}.}{}
    \end{mini!}
The problem~\eqref{eq:update_V} is a convex quadratically constrained quadratic program  (QCQP), as both $\bf{\Xi}$ and $\mathbf{\Upsilon}$ are positive semidefinite. Thus it can be solved by existing convex optimization toolbox.    
\item \textit{Update $\pmb{\varphi}$}: When $\mathbf{v}$ and $\pmb{\xi}$ are fixed, the subproblem for updating $\pmb{\varphi}$ is expressed as
    \begin{mini!}
    {\pmb{\varphi }}{ {\left\| {{\mathbf{v}} - {\pmb{\varphi }} + {\pmb{\xi }}/\varrho } \right\|^2}}{}{}
    \addConstraint{\eqref{eq:theta_opt_c3}.}{}
    \end{mini!}
The optimal $\pmb{\varphi}^\star$ can be easily derived by the phase alignment as follows
\begin{equation} \label{eq:update_varphi}
    \pmb{\varphi}^\star = e^{\jmath \left( \varrho\mathbf{v} + \pmb{\xi}\right)}.
\end{equation}
\item Update $\pmb{\xi}$: When $\mathbf{v}$ and $\pmb{\varphi}$ are fixed, the subproblem for updating the dual variable $\pmb{\xi}$ is expressed as
\begin{mini!}
{\pmb{\xi }}{ {\left\| {{\mathbf{v}} - {\pmb{\varphi }} + {\pmb{\xi }}/\varrho } \right\|^2}.}{}{}
\end{mini!}
According to the conjugate gradient descent method, the dual variable $\pmb{\xi}$ is updated by
\begin{equation} \label{eq:update_xi}
    \pmb{\xi} := \pmb{\bar \xi} + 2\varrho\left(\mathbf{v} - \pmb{\varphi}  \right),
\end{equation}
where $\pmb{\bar \xi}$ is the value in the last iteration.
\end{enumerate}

The steps for reflection coefficient design is summarized in Algorithm~\ref{alg3}, and $\mathbf{\bar v}$ is the solution in the last iteration. The ADMM algorithm is guaranteed to converge to the set of stationary solutions~\cite{hong2016convergence}. 
\begin{algorithm}[t]
	\renewcommand{\algorithmicrequire}{\textbf{Input:}}
	\renewcommand{\algorithmicensure}{\textbf{Output:}}
	\caption{ADMM Method for Reflection Coefficient Design}
	\label{alg3}
	\begin{algorithmic}[1]
		\STATE \algorithmicrequire ~ analog beamfomer $\mathbf{W}_{\sf RF}$ and baseband digital beamfomer $\mathbf{W}_{\sf BB}$ .
            \STATE Initialization: convergence accuracy $\varepsilon$, and time iteration index $t=1$.
		\REPEAT
		\STATE Update $\mathbf{v}$ by solving problem~\eqref{eq:update_V},
            \STATE Update $\pmb{\varphi}$ in~\eqref{eq:update_varphi},
            \STATE Update $\pmb{\xi}$ in~\eqref{eq:update_xi}. 
      \UNTIL $\frac{\left\| \mathcal{L}_\varrho(\mathbf{v }^{(t)},\pmb{\varphi}^{(t)},\pmb{\xi}^{(t)}) -\mathcal{L}_\varrho(\mathbf{v}^{(t-1)},\pmb{\varphi}^{(t-1)},\pmb{\xi}^{(t-1)}) \right\|}{\left\| \mathcal{L}_\varrho(\mathbf{v}^{(t-1)},\pmb{\varphi}^{(t-1)},\pmb{\xi}^{(t-1)}) \right\|} < \varepsilon$ or $t \geqslant t_{\max}$, terminal.
		\ENSURE  $\mathbf{v}^\star$
	\end{algorithmic}  
\end{algorithm}

\begin{algorithm}[t]
	\renewcommand{\algorithmicrequire}{\textbf{Input:}}
	\renewcommand{\algorithmicensure}{\textbf{Output:}}
	\caption{Overall Algorithm for XL-RIS empowered Covert Communication Problem~\eqref{eq:overall_pro}}
	\label{alg4}
	\begin{algorithmic}[1]
            \STATE Initialization: convergence accuracy $\varepsilon$,  analog beamfomer $\mathbf{W}_{\sf RF}^{(0)}$ and the baseband digital beamformer $\mathbf{W}_{\sf BB}^{(0)}$, reflection coefficient at the XL-RIS $\mathbf{\Theta}^{(0)}$, maximum number of iterations $t_{\max}$ and iteration index $t=1$, calculate the objective function of~\eqref{eq:overall_pro} as $f\left(\mathbf{W}_{\sf RF}^{(0)} ,\mathbf{W}_{\sf BB}^{(0)},\mathbf{\Theta}^{(0)}\right)$.
		\REPEAT
		\STATE Update $\mathbf{W}_{\sf FD}$ by Algorithm~\ref{alg1_2},
            \STATE Approximate $\mathbf{W}_{\sf RF}$ and $\mathbf{W}_{\sf BB}$ by Algorithm~\ref{alg2},
            \STATE Update $\mathbf{v}$ by Algorithm~\ref{alg3}. 
      \UNTIL $\frac{\left\| f\left(\mathbf{W}_{\sf RF}^{(n)} ,\mathbf{W}_{\sf BB}^{(n)},\mathbf{\Theta}^{(t)}\right) -f\left(\mathbf{W}_{\sf RF}^{(t-1)} ,\mathbf{W}_{\sf BB}^{(t-1)},\mathbf{\Theta}^{(t-1)}\right) \right\|}{\left\| f\left(\mathbf{W}_{\sf RF}^{(t-1)} ,\mathbf{W}_{\sf BB}^{(t-1)},\mathbf{\Theta}^{(t-1)}\right) \right\|} < \varepsilon$ or $t>t_{\max}$, terminal.
		\ENSURE  $\mathbf{W}_{\sf RF}^\star$, $\mathbf{W}_{\sf BB}^\star$, $\mathbf{\Theta}^\star$.
	\end{algorithmic}  
\end{algorithm}

\subsection{Overall Algorithm and Complexity Analysis}
The overall algorithm for solving the overall joint beamforming problem \eqref{eq:overall_pro} is summarized in Algorithm~\ref{alg4}. 

As adopted in the classic work on hybrid beamforming design for mmWave communications~\cite{sohrabi2016hybrid}, we assume that the number of RF chains is greater than or equal to twice the number of data streams~\cite{sohrabi2016hybrid}. Since both the WMMSE and ADMM-based algorithms are guaranteed to converge, the objective value sequence $\left\{f \left(\mathbf{W}_{\sf RF}^{(n)}, \mathbf{W}_{\sf BB}^{(n)},\mathbf{\Theta}^{(n)}\right), t=1,2,\dots\right\}$  of ~\eqref{eq:overall_pro} is guaranteed to converge due to the following inequality

\begin{equation} \label{eq:conv}
\begin{aligned}
  & f \left(\mathbf{W}_{\sf RF}^{(n)} ,\mathbf{W}_{\sf BB}^{(n)},\mathbf{\Theta}^{(n)}\right)  \\ & \quad \quad \leqslant f \left(\mathbf{W}_{\sf RF}^{(n+1)} ,\mathbf{W}_{\sf BB}^{(n+1)},\mathbf{\Theta}^{(n)}\right)  \\  & \quad \quad \leqslant   f \left(\mathbf{W}_{\sf RF}^{(n+1)} ,\mathbf{W}_{\sf BB}^{(n+1)},\mathbf{\Theta}^{(n+1)}\right),   
\end{aligned} 
\end{equation}
which implies the monotonic increase of the generated sequence  $\left\{f \left(\mathbf{W}_{\sf RF}^{(n)} ,\mathbf{W}_{\sf BB}^{(n)}, \mathbf{\Theta}^{(n)}\right), t=1,2,\dots\right\}$, and the covert rate has an upper bound due to the limited transmit power. Hence, the Algorithm~\ref{alg4} is guaranteed to converge. 

Notice that the AO-based approach only converges to a suboptimal solution in general. The reason is as follows. The problem \eqref{eq:overall_pro} is not jointly convex with respect to $\mathbf{W}_{\sf RF}$, $\mathbf{W}_{\sf BB}$ and $\mathbf{\Theta}$, and the WMMSE method together with the ADMM algorithm for solving the problem \eqref{eq:overall_pro} returns only suboptimal solution.

Denote the total numbers of iterations for Algorithms~\ref{alg1_2}, \ref{alg2}, \ref{alg3}, \ref{alg4} are $t_1$, $t_2$, $t_3$ and $T$, respectively. The computation complexity for Algorithm~\ref{alg1_2} that relies on the inverse matrix operation, is on the order of $\mathcal{O}\left( M_{\sf A}^3\right)$~\cite{pan2020intelligent}, and the number of iterations for bisection search to get converged is $\log_2\left( \frac{\mu_{\sf u} - \mu_{\sf l}}{\varepsilon}\right)\log_2\left( \frac{\upsilon_{\sf u} - \upsilon_{\sf l}}{\varepsilon}\right)$~\cite{pan2020intelligent}. Thus, the whole computational complexity for Algorithm~\ref{alg1_2} is given by $\mathcal{O}\left( t_1\log_2\left( \frac{\mu_{\sf u} - \mu_{\sf l}}{\varepsilon}\right)\log_2\left( \frac{\upsilon_{\sf u} - \upsilon_{\sf l}}{\varepsilon}\right)M_{\sf A}^3\right)$. The computational complexity for Algorithm~\ref{alg2} mainly relies on the MO algorithm, and is given by $\mathcal{O}\left(M_{\sf A}^2 \right)$~\cite{guo2020weighted}. Then, the whole computational complexity for Algorithm~\ref{alg2} is given by $\mathcal{O}\left( t_2M_{\sf A}^2\right)$. Similarly, the complexity for Algorithm~\ref{alg3} mainly relies on solving sub-problem~\eqref{eq:update_V}, which is given by $\mathcal{O}\left(N^{3.5} \right)$. The whole computational complexity for Algorithm~\ref{alg3} is given by $\mathcal{O}\left( t_3N^{3.5}\right)$~\cite{hong2016convergence}, which is much lower than the computational complexity $\mathcal{O}\left( t_3N^{7}\right)$ arising from the SDR-based algorithm~\cite{wei2023secure,ge2022reconfigurable}.  Hence, the total computational complexity for Algorithm~\ref{alg4} is $\mathcal{O}\left(T\left( t_1\log_2\left( \frac{\mu_{\sf u} - \mu_{\sf l}}{\varepsilon}\right)\log_2\left( \frac{\upsilon_{\sf u} - \upsilon_{\sf l}}{\varepsilon}\right)M_{\sf A}^3 + t_2M_{\sf A}^2 + t_3N^{3.5}   \right)\right)$.

\section{Beam Diffraction in XL-RIS-Assisted System}\label{sec:Beam_diff}

This section reveals a unique beam pattern, referred to as beam diffraction, in the XL-RIS-assisted near-field system. A mathematical analysis is conducted to explain the existence of this beam pattern.


In far-field communications, beamforming is a multi-antenna technique that steers the wireless signal toward the intended receiver by manipulating antenna elements. This produces a radiation pattern similar to a flashlight beam at a particular angle. In other words, beamforming exhibits DoF solely in the angular domain, because the array response vector in the far-field is only determined by the angle.


In near-field communications, a new beamforming paradigm known as beam focusing was introduced in~\cite{liu2023near,zhang2022beam,zhang20236g}. The radiation pattern can focus on a specific location in the polar domain. This effect is caused by the asymptotic orthogonality exhibited by the near-field array response vector in both distance and angular domains with an increasing number of antennas $M$~\cite[Appendix A]{wu2023multiple}, which can be expressed as follows 
\begin{equation} \label{eq:asy_ort}
\begin{aligned}
    & \mathop {\lim }\limits_{M \to  + \infty } \left| {{{\mathbf{a}}^{\sf{H}}}\left( {{\phi _1},{d_1}} \right) {\mathbf{a}}\left( {{\phi _2},{d_2}} \right)} \right| = 0, \\ & \quad \quad \quad {\text{ for }}{\phi _1} \ne {\phi _2}{\text{ or }}{d_1} \ne {d_2},
\end{aligned}
\end{equation}
where $\phi_1$, $\phi_2$, $d_1$, $d_2$ denote the angle and distance in the polar domain, respectively. And this property implies that beam focusing has DoF in both angular and distance domains.

Clearly, for a traditional wireless system where Bob is within Alice's near-field region, the benefits of asymptotic orthogonality can be exploited to achieve a positive covert rate by employing the maximum ratio transmission (MRT) at Alice with power control. However, for XL-RIS-assisted communication systems where Bob is located within the near-field range of XL-RIS, we have the following theorem.
\begin{theorem}\label{theorem1}
     The equivalent channels in near-field XL-RIS-assisted communication systems do not satisfy asymptotic orthogonality, and the channel vectors are also not collinear.
\end{theorem}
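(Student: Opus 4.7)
The plan is to exploit the asymmetric propagation structure of the system: the Alice-to-XL-RIS link obeys a far-field SV model with only $L_{\sf p}$ resolvable paths, whereas the XL-RIS-to-Bob and XL-RIS-to-Willie links are near-field spherical-wave channels. Substituting the SV expansion of $\mathbf{G}$ into $\mathbf{H}_{\sf B}=\mathbf{H}\mathbf{\Theta}\mathbf{G}$ and $\mathbf{H}_{\sf W}=\mathbf{F}\mathbf{\Theta}\mathbf{G}$ decomposes them as
\[
\mathbf{H}_{\sf B}=\sum_{i=1}^{L_{\sf p}}\eta_{i}\,\mathbf{p}_{i}\,\mathbf{a}_{\sf ULA}^{\sf H}(\vartheta_{\sf ar,t}^{i}),\quad \mathbf{H}_{\sf W}=\sum_{i=1}^{L_{\sf p}}\eta_{i}\,\mathbf{q}_{i}\,\mathbf{a}_{\sf ULA}^{\sf H}(\vartheta_{\sf ar,t}^{i}),
\]
with $\eta_i=\alpha_i\sqrt{M_{\sf A}N\chi_{\sf ar}/L_{\sf p}}$, $\mathbf{p}_{i}=\mathbf{H}\mathbf{\Theta}\mathbf{a}_{\sf UPA}(\vartheta_{\sf ar,r}^{i},\phi_{\sf ar,r}^{i})$, and $\mathbf{q}_{i}$ defined analogously with $\mathbf{F}$. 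The crucial structural observation is that every row of $\mathbf{H}_{\sf B}$ and every row of $\mathbf{H}_{\sf W}$ lies in the same $L_{\sf p}$-dimensional subspace of $\mathbb{C}^{M_{\sf A}}$ spanned by $\{\mathbf{a}_{\sf ULA}^{*}(\vartheta_{\sf ar,t}^{i})\}_{i=1}^{L_{\sf p}}$, a subspace whose dimension is fixed no matter how large $M_{\sf A}$ becomes. This structural straitjacket is what ultimately blocks the asymptotic orthogonality displayed in~\eqref{eq:asy_ort}.

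To establish non-orthogonality, I would pick a representative row from each equivalent channel and write them as $\mathbf{g}_{\sf B}=\sum_{i}\beta_{i}^{\sf B}\,\mathbf{a}_{\sf ULA}^{*}(\vartheta_{\sf ar,t}^{i})$ and $\mathbf{g}_{\sf W}=\sum_{i}\beta_{i}^{\sf W}\,\mathbf{a}_{\sf ULA}^{*}(\vartheta_{\sf ar,t}^{i})$, where the scalar coefficients $\beta_{i}^{\sf B}$, $\beta_{i}^{\sf W}$ absorb $\eta_i$, the selected entries of $\mathbf{p}_{i}$ and $\mathbf{q}_{i}$, and the XL-RIS phase shifts. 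Forming $\mathbf{g}_{\sf B}^{\sf H}\mathbf{g}_{\sf W}$ and invoking the standard far-field ULA orthogonality $\mathbf{a}_{\sf ULA}^{\sf H}(\vartheta^{i})\mathbf{a}_{\sf ULA}(\vartheta^{j})\to 0$ for $i\neq j$ as $M_{\sf A}\to\infty$ kills the off-diagonal cross-terms but leaves the diagonal sum $\sum_{i}(\beta_{i}^{\sf B})^{*}\beta_{i}^{\sf W}$, which is nonzero for generic geometries. Hence $|\mathbf{g}_{\sf B}^{\sf H}\mathbf{g}_{\sf W}|\not\to 0$, contradicting~\eqref{eq:asy_ort}.

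To establish non-collinearity, suppose $\mathbf{g}_{\sf B}=\lambda\,\mathbf{g}_{\sf W}$ for some $\lambda\in\mathbb{C}$. The asymptotic linear independence of the $L_{\sf p}$ distinct ULA responses forces $\beta_{i}^{\sf B}=\lambda\,\beta_{i}^{\sf W}$ for every $i=1,\ldots,L_{\sf p}$. Unrolling the definition shows that Bob's row $\mathbf{h}_{m_{\sf B}}$ and Willie's row $\mathbf{f}_{m_{\sf W}}$ would then have to produce the same, $i$-independent ratio when projected onto every $\mathbf{\Theta}\mathbf{a}_{\sf UPA}(\vartheta_{\sf ar,r}^{i},\phi_{\sf ar,r}^{i})$. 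Because those rows are distance- and angle-dependent spherical-wave responses attached to distinct physical positions of Bob and Willie, no single $\lambda$ can satisfy all $L_{\sf p}$ equalities simultaneously except in measure-zero degenerate configurations such as co-location.

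The main obstacle is making ``generically nonzero'' precise, given that $\mathbf{\Theta}$ is itself an optimization variable that could conceivably be tuned to annihilate individual $\beta_{i}$'s. I would handle this by arguing that the $L_{\sf p}$ reflected steering vectors $\mathbf{\Theta}\mathbf{a}_{\sf UPA}(\vartheta_{\sf ar,r}^{i},\phi_{\sf ar,r}^{i})$ remain linearly independent for generic $\mathbf{\Theta}$ and generic AoAs, so no unit-modulus diagonal matrix can force all $L_{\sf p}$ inner products involving $\mathbf{h}_{m_{\sf B}}$ (or the required proportionality with $\mathbf{f}_{m_{\sf W}}$) into a pathological configuration at once. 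With this non-degeneracy pinned down, both halves of the theorem drop out of the decomposition above.
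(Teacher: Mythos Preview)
Your approach is correct but attacks the cascade from the opposite end. The paper works on the XL-RIS side: it writes the equivalent-channel inner product as $\rho_1\rho_2\,|\mathbf{a}_1^{\sf H}\,\mathbf{\Theta}\mathbf{G}\mathbf{G}^{\sf H}\mathbf{\Theta}^{\sf H}\,\mathbf{a}_2|$ and, via an eigen-decomposition of $\mathbf{G}\mathbf{G}^{\sf H}$, shows that no diagonal unitary $\mathbf{\Theta}$ can turn the middle kernel into a scaled identity, which is precisely the structure that would be needed to carry the near-field orthogonality of $\mathbf{a}_1,\mathbf{a}_2$ through; for non-collinearity it notes that $\bar{\mathbf{h}}_1^{\sf H}=\alpha\bar{\mathbf{h}}_2^{\sf H}$ forces $(\mathbf{h}_1^{\sf H}-\alpha\mathbf{h}_2^{\sf H})\mathbf{\Theta}\mathbf{G}=\mathbf{0}$ with a nonzero left factor. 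You instead exploit the Alice-side rank bottleneck: the $L_{\sf p}$-term SV expansion of $\mathbf{G}$ traps every equivalent-channel row in a fixed $L_{\sf p}$-dimensional subspace of $\mathbb{C}^{M_{\sf A}}$, so both questions collapse to whether the coordinate vectors $(\beta_i^{\sf B})$ and $(\beta_i^{\sf W})$ are orthogonal or parallel in $\mathbb{C}^{L_{\sf p}}$. Your argument makes the obstruction more explicit (a finite-rank link cannot transmit an asymptotically orthogonal family) and yields a cleaner non-collinearity criterion ($L_{\sf p}$ simultaneous ratio equalities), whereas the paper's version ties the failure directly to the unit-modulus RIS constraint and avoids appealing to the particular far-field path count.

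One point to tighten: the relevant limit in the theorem is the XL-RIS size $N\to\infty$ (the $M$ in~\eqref{eq:asy_ort} is the near-field array, here the RIS), not $M_{\sf A}\to\infty$. Your invocation of $M_{\sf A}\to\infty$ is only a device to diagonalize the ULA Gram matrix; the substantive claim---that the coordinate vectors $(\beta_i^{\sf B})$ and $(\beta_i^{\sf W})$ are generically neither orthogonal nor parallel---does not actually need that limit and should be stated for fixed $M_{\sf A}$ with $N$ growing, since the $\beta_i$'s themselves depend on $N$ through $\mathbf{p}_i,\mathbf{q}_i$. With that adjustment the argument goes through.
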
 
\begin{proof}
    For ease of description, we consider a basic scenario where the receivers are equipped with a single antenna\footnote{Our proof can be extended to the scenario that the receivers are equipped with multiple antennas, and the simulation results in Fig.~\ref{fig:beam_pattern} are obtained for the scenario of multiple antennas.}, and there are only LoS paths between the XL-RIS and the receivers. Thus, the channels for two receivers in distinct locations can be given, respectively,
\begin{equation}
\begin{aligned}
      &  \mathbf{h}_1 = \rho_1\mathbf{a}_1\left( {{\phi _1},{d_1}} \right), \\
   & \mathbf{h}_2 = \rho_2\mathbf{a}_2\left( {{\phi _2},{d_2}} \right), \\
    &   {\text{ for }}{\phi _1} \ne {\phi _2}{\text{ or }}{d_1} \ne {d_2},
\end{aligned}
\end{equation}
where $\rho_1$ and $\rho_2$ denote the path loss of the two receivers, respectively, and $\mathbf{a}_1\left( {{\phi _1},{d_1}} \right)$ and $\mathbf{a}_2\left( {{\phi_2},{d_2}} \right)$ are the corresponding array response vectors, which satisfy the asymptotic orthogonality in~\eqref{eq:asy_ort}. Then, the equivalent channels for the two receivers are given as 
\begin{equation}
\begin{aligned}
        \mathbf{\bar h}_1^{\sf{H}} & = \mathbf{h}_1^{\sf{H}}\mathbf{\Theta}\mathbf{G}, \\
         \mathbf{\bar h}_2^{\sf{H}} & = \mathbf{h}_2^{\sf{H}}\mathbf{\Theta}\mathbf{G}.
\end{aligned}
\end{equation}
The inner product of the two channels is expressed as
\begin{equation}
    \left| \mathbf{\bar h}_1^{\sf{H}} \mathbf{\bar h}_2 \right| = {\rho _1}{\rho _2}\left| {{\mathbf{a}}_1^{\sf{H}}\left( {{\phi _1},{d_1}} \right){\mathbf{\Theta G}}{{\mathbf{G}}^{\sf{H}}}{{\mathbf{\Theta }}^{\sf{H}}}{{\mathbf{a}}_2}\left( {{\phi _2},{d_2}} \right)} \right|.
\end{equation}
If the two channels satisfy the asymptotic orthogonality, i.e., $\mathop {\lim }\limits_{M \to  + \infty } \left| \mathbf{\bar h}_1^{\sf{H}} \mathbf{\bar h}_2 \right| = 0$, then ${\mathbf{\Theta G}}{{\mathbf{G}}^{\sf{H}}}{{\mathbf{\Theta}}^{\sf{H}}}$ should be a scaled identity matrix $\nu\mathbf{I}_M$, where $\nu$ is the scaled factor.

Since $\mathbf{G}\mathbf{G}^{\sf H}\in \mathbb{C}^{N \times N}$ is a Hermitian matrix,  it can be decomposed by the eigenvalue decomposition as follows 
\begin{equation}
    \mathbf{G}\mathbf{G}^{{\sf{H}}} = \mathbf{Q}\mathbf{\Lambda}\mathbf{Q}^{{\sf{H}}} =\sum\limits_{i=1}^N {{\lambda_i}{\mathbf{q}_i}{{\mathbf{q}_i^{\sf{H}}}}},
\end{equation}
where  $\mathbf{Q} = \left[\mathbf{q}_1,\dots,\mathbf{q}_N \right]\in\mathbb{C}^{N\times N}$, and $\lambda_i$ and $\mathbf{q}_i, \forall i=1,\dots N$, are the eigenvalues and the corresponding eigenvectors of the matrix. Hence, we have the following equation
\begin{equation} \label{eq:TGGT}
\begin{aligned}
 &   {\mathbf{\Theta G}}{{\mathbf{G}}^{\sf{H}}}{{\mathbf{\Theta }}^{\sf{H}}} =  \\ 
 & \left[ {\begin{array}{*{20}{c}}
  {\sum\nolimits_{i = 1}^N {{\lambda _i}{{\left| {{q_{1,i}}} \right|}^2}} }& \ldots &{\sum\nolimits_{i = 1}^N {{\lambda _i}{e^{\jmath {\theta_1-\theta_N}}}{q_{1,i}}q_{N,i}^*} } \\ 
   \vdots & \ddots & \vdots  \\ 
  {\sum\nolimits_{i = 1}^N {{\lambda _i}{e^{\jmath {\theta _N-\theta_1}}}{q_{N,i}}q_{1,i}^*} }& \ldots &{\sum\nolimits_{i = 1}^N {{\lambda _i}{{\left| {{q_{N,i}}} \right|}^2}} } 
\end{array}} \right], 
\end{aligned}    
\end{equation}
where $q_{i,j}$ is the $i$-th row and $j$-th column element of $\mathbf{Q}$. From~\eqref{eq:TGGT}, we find that the diagonal elements are not equal and the off-diagonal elements are also not zero unless $\mathbf{G}\mathbf{G}^{\sf H}$ itself is a scaled identity matrix. The matrix ${\mathbf{\Theta G}}{{\mathbf{G}}^{\sf{H}}}{{\mathbf{\Theta }}^{\sf{H}}}$ can not be set as a scaled identity matrix by only adjusting the coefficients of the XL-RIS.

Two equivalent channel vectors are colinear, if they satisfy the condition $\mathbf{\bar h}_1^{\sf{H}} = \alpha \mathbf{\bar h}_2^{\sf{H}}$, where $\alpha$ is a scaling factor. Consequently, we need $\left(\mathbf{h}_1^{\sf{H}} - \alpha\mathbf{h}_2^{\sf{H}}\right)\mathbf{\Theta}\mathbf{G}=0$. However, the two channels $\mathbf{h}_1$ and $\mathbf{h}_2$ satisfy the  asymptotic orthogonality, hence $\left(\mathbf{h}_1^{\sf{H}} - \alpha\mathbf{h}_2^{\sf{H}}\right)$ cannot be equal to $\mathbf{0}$. In other words, $\mathbf{\bar h}_1^{\sf{H}}$ cannot be equal to $\alpha \mathbf{\bar h}_2^{\sf{H}}$ for arbitrary $\alpha$. Hence, the proof is complete.
\end{proof}


 

Based on the above theorem, the equivalent near-field channels in the XL-RIS-assisted communication system exhibit linear independence rather than asymptotic orthogonality, which is different from the properties of traditional far-field and near-field channels. Consequently, a novel beam pattern termed beam diffraction emerges, representing a transitional state between beamforming and beam focusing. Specifically, this beam pattern allows for bypassing non-target receivers at the same angle before converging on the intended user or vice versa.

Particularly, the beam can be focused on a specific area in XL-RIS-assisted systems if a certain requirement is satisfied, which is given in Corollary~\ref{corollary2}.
\begin{corollary} \label{corollary2}
     Let $\mathbf{a} = \left[a_1,\dots,a_N \right]^{\sf T} \triangleq \mathbf{Gw}$ and $ = \mathbf{b} = \left[b_1,\dots,b_{N} \right]^{\sf T} \triangleq {\rm{diag}}(\mathbf{h}_1^{\sf H})\mathbf{Gw}$, where $\mathbf{h}_1$ represents the channel between the XL-RIS and Bob. Then, based on the Theorem~\ref{theorem1}, the beam can focus on a specific area if the following requirement is satisfied
   \begin{equation}
       \frac{{{{\left| {{a_1}} \right|}^2}}}{{\left| {{b_1}} \right|}} = \frac{{{{\left| {{a_2}} \right|}^2}}}{{\left| {{b_2}} \right|}} = ... = \frac{{{{\left| {{a_N}} \right|}^2}}}{{\left| {{b_N}} \right|}}.
   \end{equation}
\end{corollary}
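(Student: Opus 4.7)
The plan is to show that, under the stated ratio condition, the RIS reflects the signal from Alice so that it combines coherently at Bob while the resulting composite channel vector becomes asymptotically orthogonal to the channel vector of any other location, which is precisely the defining feature of beam focusing. The key observation is that, since $b_n = h_{1,n}^*\,a_n$, one has $|b_n| = |h_{1,n}|\,|a_n|$, so the hypothesis that $|a_n|^2/|b_n|$ is independent of $n$ is equivalent to $|a_n| = k\,|h_{1,n}|$ for some constant $k>0$; in words, the magnitude profile that the Alice-side beamformer $\mathbf{w}$ induces across the RIS aperture matches the magnitude profile of the RIS-to-Bob channel.

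First, I would write Bob's received amplitude as $\mathbf{h}_1^{\sf H}\mathbf{\Theta}\mathbf{G}\mathbf{w} = \sum_{n=1}^{N} h_{1,n}^* e^{\jmath\theta_n} a_n = \sum_{n=1}^{N} b_n e^{\jmath\theta_n}$, and fix the RIS phases to $\theta_n^\star = -\arg(b_n)$ so that every summand is real and positive, yielding the MRT-optimal magnitude $\sum_n |b_n|$. Next, I would substitute the same phases into the signal received at any other location whose near-field channel is $\mathbf{h}_2 = \rho_2\mathbf{a}(\phi_2,d_2)$ with $(\phi_2,d_2)\neq(\phi_1,d_1)$. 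Using $\arg(b_n) = -\arg(h_{1,n}) + \arg(a_n)$, the factor $e^{\jmath\theta_n^\star} a_n$ simplifies to $|a_n|\,h_{1,n}/|h_{1,n}|$, so the leakage at the other location collapses to $\sum_n h_{2,n}^*\,|a_n|\,h_{1,n}/|h_{1,n}|$.

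Invoking the hypothesis $|a_n|=k\,|h_{1,n}|$ cancels the denominator and leaves $k\sum_n h_{2,n}^*\,h_{1,n} = k\,\mathbf{h}_2^{\sf H}\mathbf{h}_1$. Because $\mathbf{h}_i = \rho_i\,\mathbf{a}(\phi_i,d_i)$ are near-field array-response vectors at distinct locations, the asymptotic orthogonality \eqref{eq:asy_ort} forces this inner product to vanish as the RIS aperture grows, hence the leakage vanishes at every location other than Bob's while Bob enjoys fully coherent combining. This is exactly beam focusing in the sense of Section~\ref{sec:Beam_diff}.

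The main obstacle is the clean cancellation of phases in the leakage expression: I must verify that the optimal $\theta_n^\star$ chosen for Bob-side coherence precisely removes $\arg(a_n)$, leaving only $h_{1,n}/|h_{1,n}|$, so that multiplying by the magnitude $|a_n|=k\,|h_{1,n}|$ yields the pure factor $k\,h_{1,n}$ required to assemble $\mathbf{h}_2^{\sf H}\mathbf{h}_1$. A secondary subtlety is that the Alice-RIS channel $\mathbf{G}$ is modeled as a far-field SV channel and does not itself satisfy \eqref{eq:asy_ort}; the argument therefore works only because the hypothesis has already stripped $\mathbf{G}\mathbf{w}$ down to its magnitudes, reducing the leakage to an inner product involving exclusively the two near-field response vectors $\mathbf{a}(\phi_1,d_1)$ and $\mathbf{a}(\phi_2,d_2)$, for which \eqref{eq:asy_ort} applies.
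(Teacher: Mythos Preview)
Your proposal is correct and follows essentially the same route as the paper's proof: choose the RIS phases to coherently combine $\mathbf{b}=\mathrm{diag}(\mathbf{h}_1^{\sf H})\mathbf{Gw}$ at Bob, substitute those phases into the signal at an arbitrary second location, and show that the stated ratio condition reduces the leakage to a constant multiple of $\mathbf{h}_2^{\sf H}\mathbf{h}_1$, which vanishes by the near-field asymptotic orthogonality~\eqref{eq:asy_ort}. Your rewriting of the hypothesis as $|a_n|=k\,|h_{1,n}|$ (using $|b_n|=|h_{1,n}|\,|a_n|$) is a nice clarification that the paper leaves implicit, but the underlying computation is the same as in Appendix~\ref{App2}.
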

\begin{proof}
    Please refer to Appendix~\ref{App2}.
\end{proof}

It is worth noting that beam focusing is not a good strategy in XL-RIS-assisted covert systems because the covert rate will degrade when the requirement in Corollary~\ref{corollary2} is satisfied. The reason is as follows. We need to jointly optimize $\mathbf{w}$ and $\mathbf{v}$ so that $\mathbf{w}$ is in the orthogonal space of Willie's equivalent channel. However, this required precoding strategy to achieve beam focusing is too restrictive.

\section{Numerical Results} \label{Sub:Nem_res}

This section presents numerical results to evaluate the performance of the proposed algorithm in an XL-RIS empowered covert communication system. The system deployment is depicted in Fig.~\ref{fig:sim_env}. Considering a three-dimensional Cartesian coordinate system, the XL-RIS is deployed on the $yz$-plane with its center located at the origin of coordinate $(0,0,0)$, Alice, Bob, and Willie are positioned on the $xy$-plane. For convenience, the locations of Alice, Bob, and Willie can be represented using a polar coordinate system. Specifically, Alice is located at $(50\text{ m}, -\pi/4)$. In this simulation, we consider an extreme scenario where Willie is located between the XL-RIS and the Bob with identical azimuth angles. Unless otherwise specified, the locations of Bob and Willie are set as  $(15\text{ m}, \pi/4)$ and  $(10\text{ m}, \pi/4)$, respectively. We assume that the links between Alice and XL-RIS are LoS dominated and the corresponding path loss is modeled as $69.4 + 24.0\log_{10}(D_{\sf{AX}})$ dB~\cite{wang2021covert}, where $D_{\sf{AX}}$ denotes the distance in meter. The other default parameters are set as $M_{\sf A} = 64$, $M_{\sf B} = 4$, $M_{\sf W} = 4$, $N_{\sf y}=90$, $N_{\sf z}=8$~\cite{yu2023channel,shen2023multi,miridakis2022zero}, $M_{\sf A}^{\sf RF}=4$, $L = 2$, $N_{\sf c}=5$, $N_{\sf ray}=10$, $f = 28 \text{ GHz} $~\cite{cheng2023achievable}, $\sigma_{\sf B}^2 = -110 \text{ dBm}$, $\sigma_{\sf W}^2 = -110 \text{ dBm}$, $d_{\sf A}=\frac{\lambda}{2}$, $d_{\sf X}=\frac{\lambda}{2}$, $\rho=3 \text{ dB}$, $\kappa=0.01$, $P_{\max}=40\text{ dBm}$. As the Rayleigh distance of the XL-RIS is computed as $\frac{2D^2}{\lambda}=42.78$ m, the far-filed channel model is adopted for the Alice-to-XL-RIS channel. It is noted that our proposed algorithm is also applicable when Alice and XL-RIS are in the near field of each other, as the channel coefficients are treated as known parameters in the optimization problem. The numerical results are obtained by averaging over 100 random channel realizations.

\begin{figure}[t]
    \centering
    \includegraphics[width=0.70\linewidth]{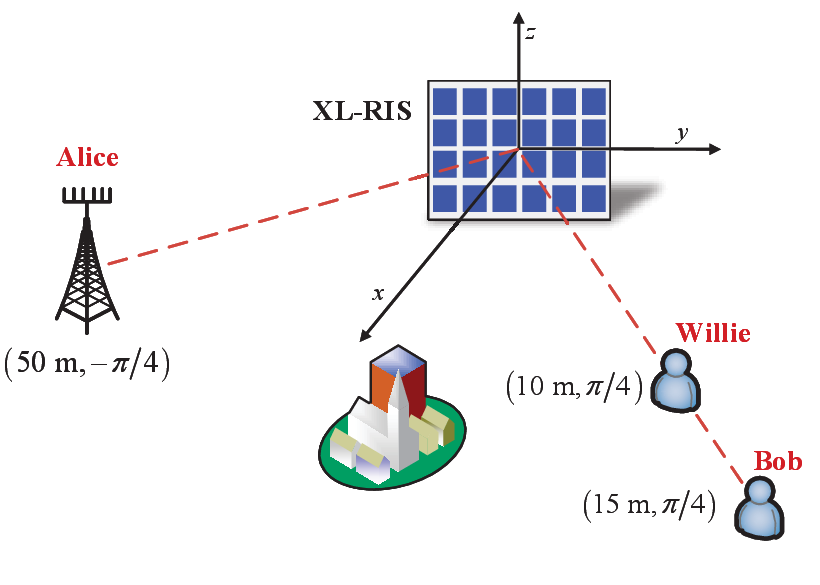}
    \caption{Deployment of an XL-RIS empowered near-field covert communication system.}
    \label{fig:sim_env}
\end{figure}

 In this paper, we compare our proposed algorithm for hybrid beamforming architectures (abbr., Proposed HB) with the following benchmark schemes: 1) Fully-digital (abbr., FD): Alice adopts a fully-digital beamformer $\mathbf{W}_{\sf FD}$ , which serves as an upper bound for the proposed algorithm; 2) Random phase (abbr., RP): The phase of XL-RIS's reflection coefficient is randomly generated from the interval $(0,2\pi]$; 3) Far-field (abbr., FF): The channel between the XL-RIS and the receivers (Bob and Willie) are modeled as conventional far-field channel model; 4) Zero-forcing (abbr., ZF): The ZF beamform is utilized to null the received power of useful signal at Willie, and the steps are as follows. For fixed $\mathbf{\Theta}$, we take the first $L$ columns of the right singular value matrix of $\mathbf{H}_{\sf B}$ as the optimal fully-digital beamformer at Alice, which is denoted by $\mathbf{\widetilde W}_{\sf FD}$. Then, we project $\mathbf{\widetilde W}_{\sf FD}$ onto the orthogonal complement of $\mathbf{H}_{\sf W}$, which is given as
 \begin{equation}
     {{\mathbf{W}}_{{\sf{FD}}}^{\sf ZF}} = \sqrt {{P_{\max }}} \frac{{\left( {{{\mathbf{I}}_{M_{\sf A}}} - {\mathbf{H}}_{\sf{W}}^\dag {{\mathbf{H}}_{\sf{W}}}} \right){{\widetilde {\mathbf{W}}}_{{\sf{FD}}}}}}{{{{\left\| {\left( {{\mathbf{I}_{M_{\sf A}}} - {\mathbf{H}}_{\sf{W}}^\dag {{\mathbf{H}}_{\sf{W}}}} \right){{\widetilde {\mathbf{W}}}_{{\sf{FD}}}}} \right\|}_{\sf{F}}}}},
 \end{equation}
 where $\left( \cdot\right)^\dag$ denotes the Moore–Penrose inverse. Once ${{\mathbf{W}}_{{\sf{FD}}}^{\sf ZF}}$ is obtained, we optimize $\mathbf{\Theta}$ according to the sum-path-gain maximization criterion~\cite{ning2020beamforming} as follows
 \begin{maxi!}
 {\mathbf{\Theta }}{{\text{ Tr}}\left( {{\mathbf{H\Theta G}}{{\mathbf{W}}_{{\sf{RF}}}}{{\mathbf{W}}_{{\sf{BB}}}}{\mathbf{W}}_{\sf{BB}}^{\sf H}{\mathbf{W}}_{\sf{RF}}^{\sf H}{{\mathbf{G}}^{\sf H}}{{\mathbf{\Theta }}^{\sf H}}{{\mathbf{H}}^{\sf H}}} \right)}{}{}
 \addConstraint{\eqref{eq:OP_RISC},}{}
 \end{maxi!}
 which can be solved by projected gradient ascent~\cite{kammoun2020asymptotic}.

Fig.~\ref{fig:conv} plots the convergence behaviors of the proposed algorithm for different Willie locations. It can be observed that the covert rate at Bob increases monotonically and saturates within a few number of iterations. On the one hand, interestingly, for fixed Bob location $(15 \text{ m}, \pi/4)$, when Willie is closer to the XL-RIS and has the same angle as Bob's, e.g., the Willie is located at $(10 \text{ m}, \pi/4)$, a positive covert rate can still be achieved, which is not realizable for far-field covert communication systems. As Willie moves farther away from the XL-RIS while maintaining the same azimuth angle as Bob's, the covert rate at Bob decreases. On the other hand, a higher covert rate is achieved, when Willie is located at $(15 \text{ m}, \pi/6)$ and has different azimuth angle from Bob. This implies the covert rate improvement from angle difference between Bob and Willie, which is similar to the results for far-field communications. 


\begin{figure}[t]
    \centering
    \includegraphics[width=0.8\linewidth]{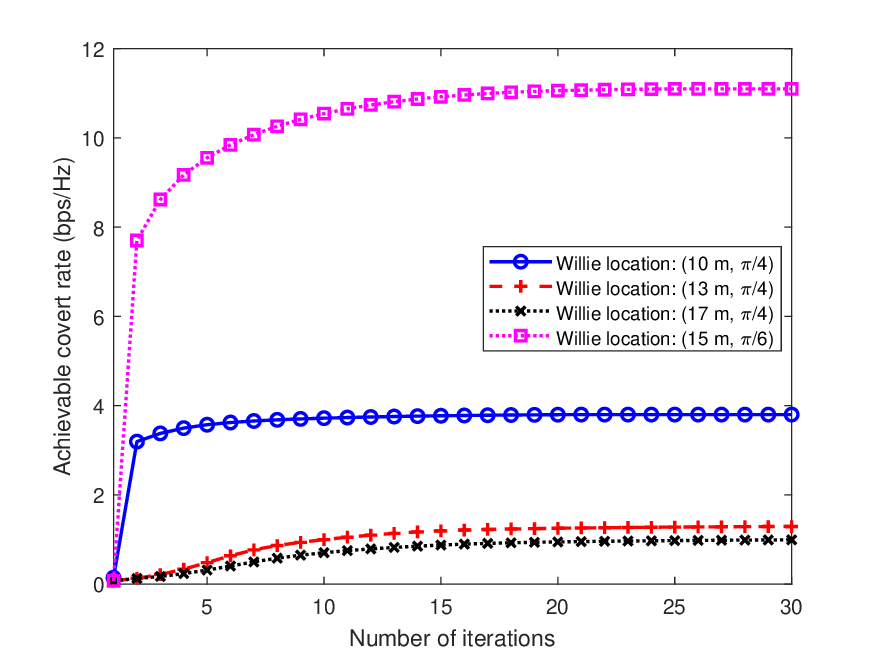}
    \caption{Convergence behaviors of the proposed algorithm.}
    \label{fig:conv}
\end{figure}

\begin{figure*}[t]
	\centering  
	\subfigbottomskip=2pt 
	\subfigcapskip=-5pt 
	\subfigure[Normalized heat map for beam steering in legacy far-field communication systems.]{
		\includegraphics[width=0.30\linewidth]{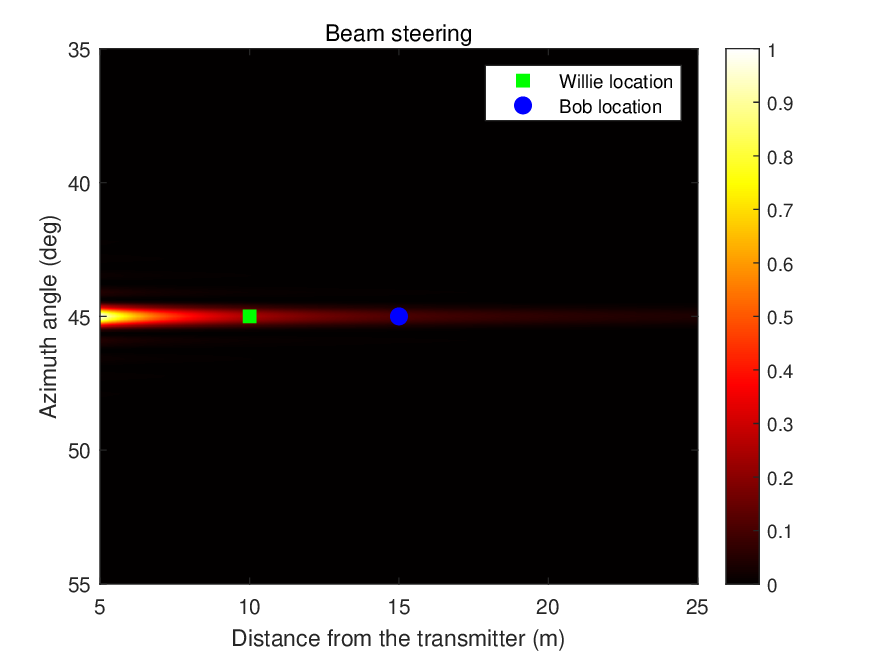} \label{fig:beamforming}}
	\subfigure[Normalized heat map for beam focusing in XL-RIS empowered near-field communication systems.]{
		\includegraphics[width=0.30\linewidth]{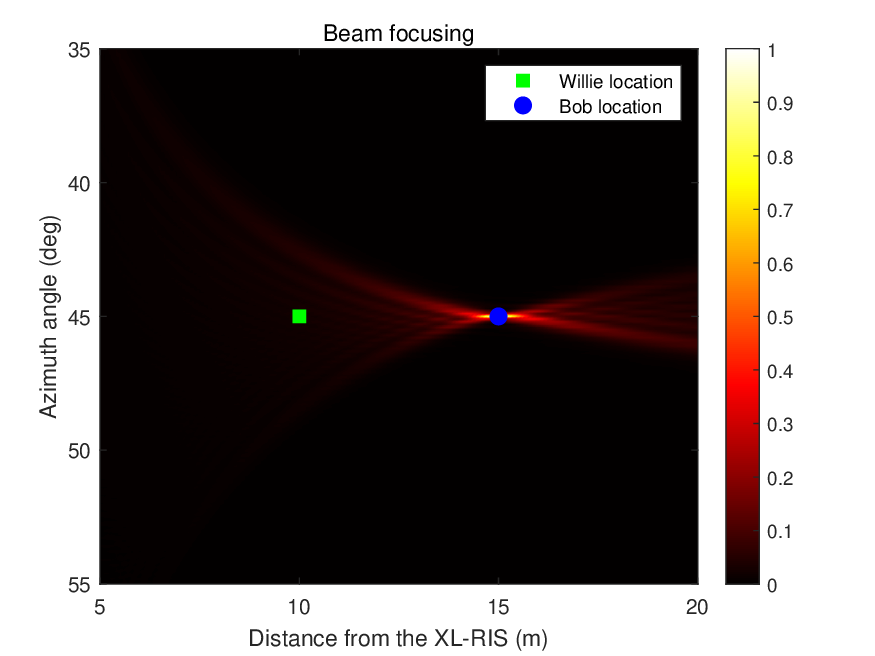} \label{fig:beam_focusing}}
	\subfigure[Normalized heat map for beam diffraction in XL-RIS empowered near-field communication systems, when Willie is located at $(10\text{ m}, 45^\circ)$.]{
		\includegraphics[width=0.30\linewidth]{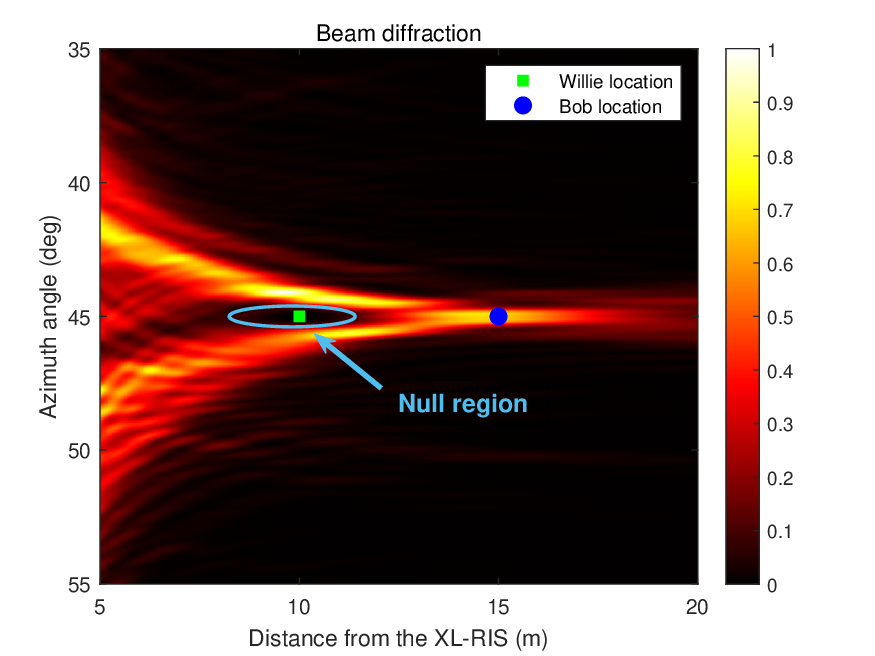} \label{fig:beam_diffraction1}}
	\subfigure[Normalized heat map for beam diffraction in XL-RIS empowered near-field communication systems, when Willie is located at $(20\text{ m}, 45^\circ)$.]{
		\includegraphics[width=0.30\linewidth]{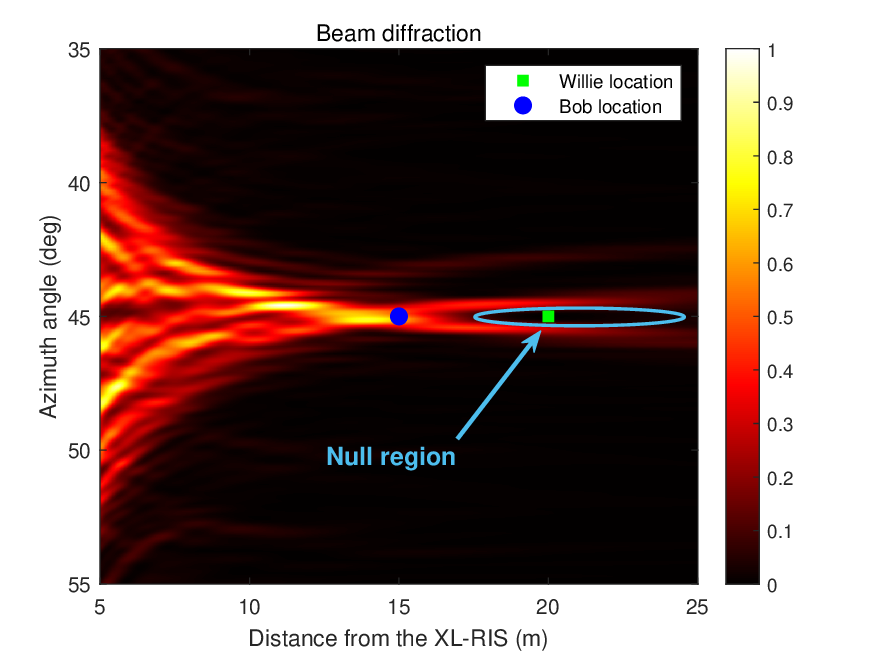} \label{fig:beam_diffraction2}}
        \subfigure[Normalized heat map for beam diffraction in XL-RIS empowered near-field communication systems, when Willie is located at $(8\text{ m}, 43^\circ)$.]{
		\includegraphics[width=0.30\linewidth]{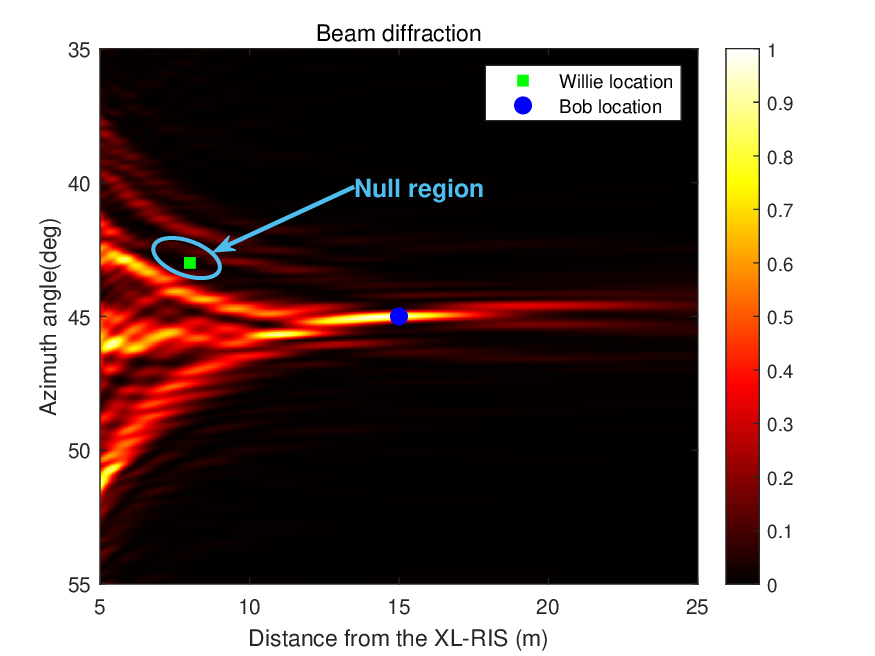} \label{fig:beam_diffraction3}}
          \subfigure[Normalized heat map for beam diffraction in XL-RIS empowered near-field communication systems, when Willie is located at $(8\text{ m}, 47^\circ)$.]{
		\includegraphics[width=0.30\linewidth]{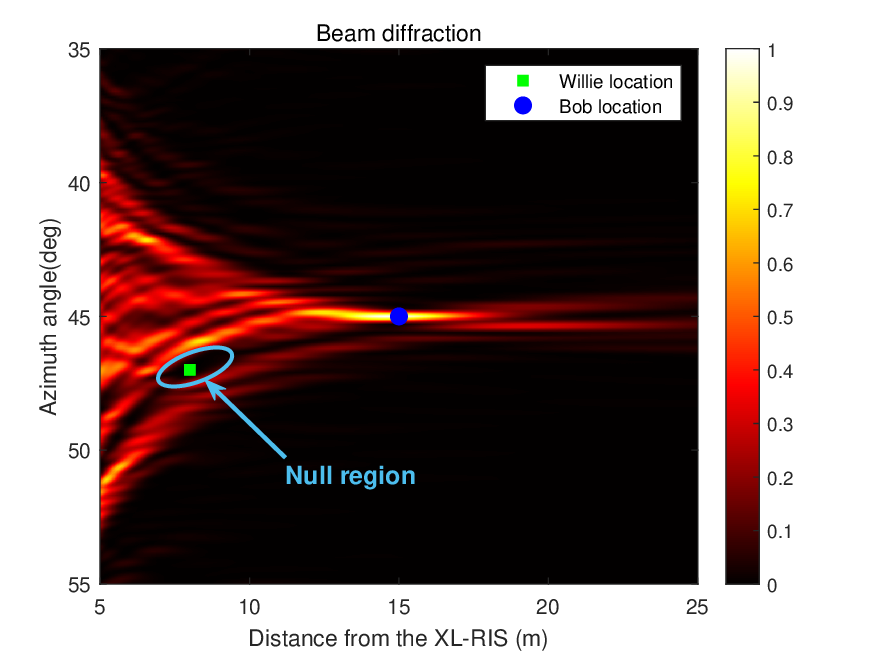} \label{fig:beam_diffraction4}}
  	\caption{Normalized heat maps for beam steering, beam focusing, and beam diffraction.} 
        \label{fig:beam_pattern}
\end{figure*}

Fig.~\ref{fig:beam_pattern} illustrates the normalized power heat-maps depicting three beam patterns. First, Fig.~\ref{fig:beamforming} shows the beam steering pattern for far-field communication systems. With the preseted user locations, Willie always has higher received power than Bob, thus it is impossible for Bob to achieve a positive covert rate. In a general case, the covert rate is also severely limited when Willie has better channel quality than Bob. Second, Fig.~\ref{fig:beam_focusing} shows the beam focusing pattern in the XL-RIS empowered near-field systems. The beam focusing is optimal for near-filed communication systems without RIS assistance, due to the channels' asymptotic orthogonality. For the considered XL-RIS empowered near-filed communication system, the equivalent cascade channels are not asymptotically orthogonal in general. The beam pattern in Fig.~\ref{fig:beam_focusing} is obtained by forcing the XL-RIS reflection coefficients to ensure asymptotic orthogonality of the equivalent cascade channels. Third, using the proposed algorithm, the beam pattern for different Willie locations are illustrated in Fig.~\ref{fig:beam_diffraction1}, Fig.~\ref{fig:beam_diffraction2}, Fig.~\ref{fig:beam_diffraction3}, and Fig.~\ref{fig:beam_diffraction4}. From numerical simulations, the covert rate achieved by the proposed beam pattern in Fig.~\ref{fig:beam_diffraction1} is 3.799 bps/Hz, which is significantly higher than the covert rate of 0.342 bps/Hz achieved by the beam focusing pattern in Fig.~\ref{fig:beam_focusing}. This shows that the proposed beam pattern is promising for XL-RIS empowered near-filed covert communications, while the beam focusing pattern suffers from poor covert transmission performance. 

In details, observed from Fig.~\ref{fig:beam_diffraction1}, Fig.~\ref{fig:beam_diffraction2}, Fig.~\ref{fig:beam_diffraction3} and Fig.~\ref{fig:beam_diffraction4}, regardless of whether Willie is closer to or farther from XL-RIS than Bob, or at the same azimuth angle as Bob, a \emph{null region} is created around Willie's location in the beam pattern. In other words, the proposed beam pattern can split and bypass the non-target user Willie, and converge on the intended user Bob, which leads to significant covert-rate enhancement. We refer to this beam pattern observed in the XL-RIS empowered covert communication systems as ``\emph{beam diffraction}'', which represents a transitional state between beam steering and beam focusing.





\begin{figure}[t]
    \centering
    \includegraphics[width=0.90\linewidth]{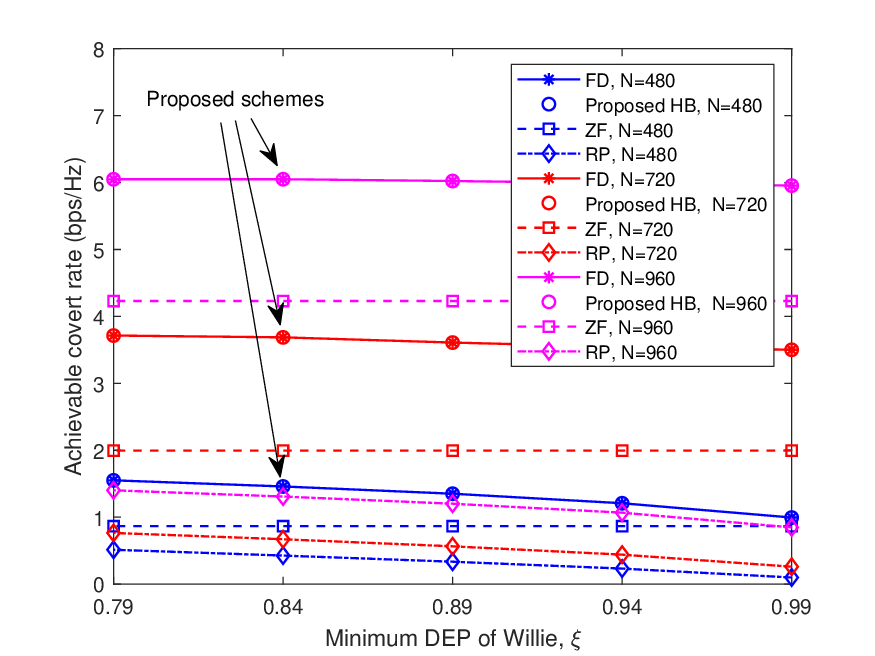}
    \caption{Covert rate versus minimum DEP requirement of Willie, $\xi$.}
    \label{fig:xi}
\end{figure}

Fig.~\ref{fig:xi} plots the covert rate performance against the minimum DEP requirement $\xi$. First, it is observed that the achievable covert rate of all schemes decreases as $\xi$ increases. Second, the performance achieved by the optimized hybrid beamformers is comparable to that achieved by the fully digital beamformers. Third, the proposed algorithm achieves superior performance compared with ZF and RP schemes, for fixed number of reflection elements $N$. Additionally, the covert-rate performance degradation becomes negligible as $\xi$ increases, for larger $N$. For instance, for $N=960$, increasing $\xi$ from 0.79 to 0.99 only results into a decrease of 0.094 bps/Hz, whereas for $N=480$, it results into a larger decrease of 0.556 bps/Hz. This observation is due to the beam diffraction pattern, which is capable of creating a null region around Willie and thus results in minimal power leakage to Willie. Therefore, reducing Willie's DEP $\xi$ will not significantly affect the covert rate. Additionally, as the number of XL-RIS elements grows, the diffraction capability improves. Consequently, the covert rate becomes even less affected by Willie's DEP.

\begin{figure}[t]
    \centering
    \includegraphics[width=0.90\linewidth]{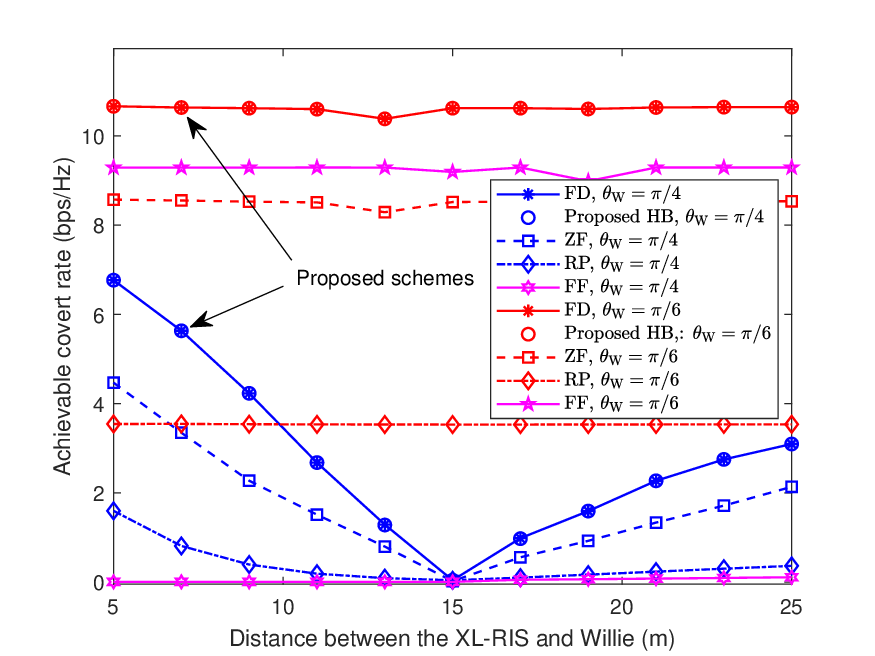}
    \caption{Covert rate versus distance between the XL-RIS and Willie.}
    \label{fig:dis}
\end{figure}


Fig.~\ref{fig:dis} plots the covert rate performance against the distance between the XL-RIS and Willie for different azimuth angles. On the one hand, when Willie is at the same azimuth angle $\pi/4$ as Bob, the proposed algorithm can always achieve satisfactory covert rate performance in near-field communication systems; in contrast, the covert rate is almost zero in far-field communication systems. In details, the covert rate decreases to zero as the location of Willie moves from (5m, $\pi/4$) to the Bob location (15m, $\pi/4$), and then increases as the location of Willie further moves far away from Bob. This indicates that the closer to the XL-RIS location (0, 0), the stronger beam diffraction ability of the XL-RIS. On the other hand, When Willie is at the azimuth angle $\pi/6$, being different from Bob's location angle $\pi/4$, the distance between the XL-RIS and Willie has negligible impact on Bob's covert rate obtained by all schemes, including far-field communications. 
\begin{figure}[t]
    \centering
    \includegraphics[width=0.9\linewidth]{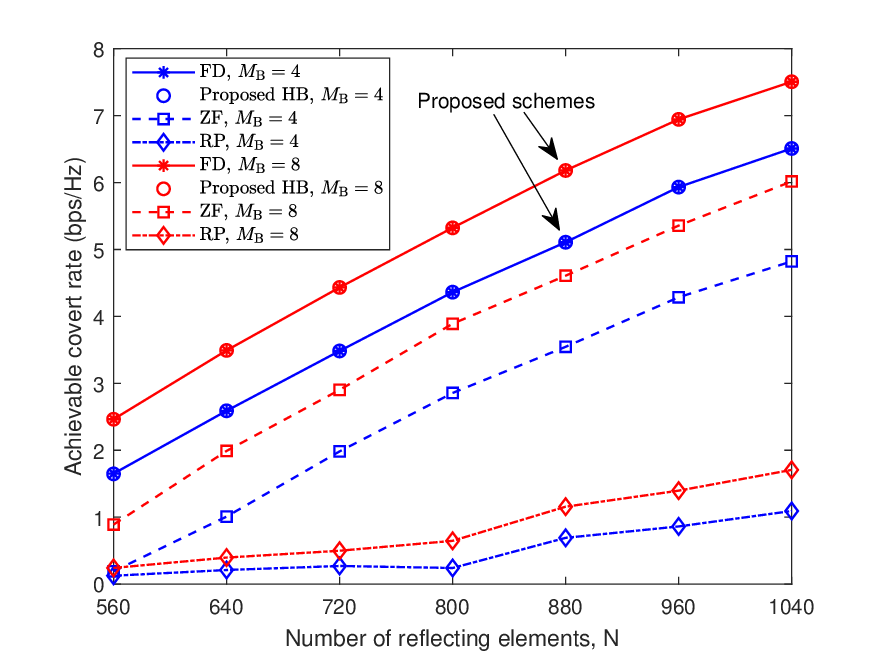}
    \caption{Covert rate versus number of XL-RIS's elements, $N$.}
    \label{fig:N}
\end{figure}

Fig.~\ref{fig:N} plots the covert rate performance against the number of reflection elements $N$ of the XL-RIS. The covert rate achieved by all algorithms increases as $N$ grows, indicating that a larger antenna aperture can enhance the covert communication performance. The proposed algorithm achieves significant performance improvement compared to the benchmark ZF and RP schemes. For example, when $N=880$ and $M_{\sf B}=4$, compared to RP, the covert rate performance gain reaches up to 5.071 bps/Hz, which emphasizes the importance of optimizing the XL-RIS's reflection coefficients. Also, equipping Bob with more antennas leads to increase in the covert rate.

\begin{figure}[t]
    \centering
    \includegraphics[width=0.9\linewidth]{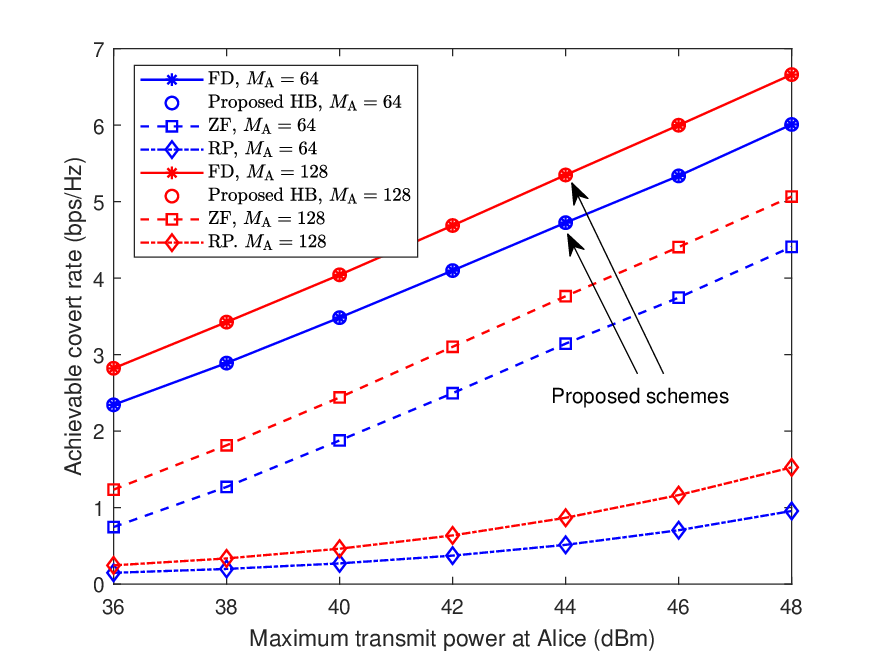}
    \caption{Covert rate versus maximum transmit power at Alice $P_{\max}$.}
    \label{fig:Pmax}
\end{figure}

Fig.~\ref{fig:Pmax} plots the covert rate performance against the maximum transmit power $P_{\max}$ at Alice. The covert rate monotonically increases as $P_{\max}$ increases for all algorithms. The proposed algorithm achieves significant covert rate performance improvement than the benchmarks. For instance, for $P_{\max}=40~ \text{dBm}$ and $M_A=64$, the covert rate is enhanced by 113\%, compared to the ZF benchmark. Also, increasing Alice's number of antennas $M_A$ can enhance the covert rate. 



\section{Conclusion}\label{Sub:Conc}

In this paper, we studied an XL-RIS empowered near-field covert communication system, and maximized the Bob's achievable covert rate under the Willie's covertness constraint. An AO-based algorithm was designed to alternatively optimize the hybrid beamformers at Alice, as well as the reflection coefficient matrix at the XL-RIS. Numerical results verified the effectiveness of the optimized design. By exploiting both the distance and angular DoFs, the near-field communications achieve higher covert rates than the far-field covert communications. More interesting, even for the extreme scenario where Willie is located at the same direction as Bob and closer to the XL-RIS, the covert transmission can still be achieved for the near-field communications, which is not realizable for far-filed communications. Moreover, with the proposed optimal design, the beam in the XL-RIS empowered near-field system can split and bypass Willie for covert transmission to Bob, referred to \emph{beam diffraction} phenomenon. This beam pattern can be extended to other XL-RIS-assisted near-field systems, such as multi-user or physical layer security systems. Although this paper assumes perfect Willie-involved CSI available at Alice, the robust covert communication design under imperfectly-or-partially-known CSI can be further studied for the XL-RIS empowered near-field covert communication system, which is out of the scope of this paper due to space limitation. 

\appendices

\section{Proof of Corollary~\ref{corollary2}} \label{App2}
Assume that $\mathbf{h}_1$ and $\mathbf{h}_2$ are the channels between the XL-RIS and Bob and the XL-RIS and Willie, respectively, and there are only LoS paths between the XL-RIS and the receivers. The received signals at Bob and Willie are given as
\begin{equation} \label{eq:coro1}
\begin{aligned}
        {y_1} & = {\mathbf{h}}_1^{\sf H}{\mathbf{\Theta Gw}} + {n_1} = {{\mathbf{v}}^{\sf H}}\text{diag}({\mathbf{h}}_1^{\sf H}){\mathbf{Gw}} + {n_1},\\
      {y_2} & = {\mathbf{h}}_2^{\sf H}{\mathbf{\Theta Gw}} + {n_2} = {{\mathbf{v}}^{\sf H}}\text{diag}({\mathbf{h}}_2^{\sf H}){\mathbf{Gw}} + {n_2}. 
\end{aligned}
\end{equation}
According to the phase alignment algorithm and the unit-modulus constraints, the optimal reflection coefficient vector for Bob at the XL-RIS is given as
\begin{equation}
    v_n = \frac{\left(\text{diag}({\mathbf{h}}_1^{\sf H}){\mathbf{Gw}} \right)_n}{\left| \text{diag}({\mathbf{h}}_1^{\sf H}){\mathbf{Gw}} \right|_n},
\end{equation}
where $v_n$ is the $n$-th element of $\mathbf{v}$. Then, substituting $\mathbf{v}$ into \eqref{eq:coro1}, yields
\begin{equation}
\begin{aligned}
y_2  
   & = {\left[ {\frac{{{{\left( {{\text{diag}}({\mathbf{h}}_1^{\sf{H}}){\mathbf{Gw}}} \right)}_1}}}{{{{\left| {{\text{diag}}({\mathbf{h}}_1^{\text{H}}){\mathbf{Gw}}} \right|}_1}}}, \ldots ,\frac{{{{\left( {{\text{diag}}({\mathbf{h}}_1^{\sf{H}}){\mathbf{Gw}}} \right)}_N}}}{{{{\left| {{\text{diag}}({\mathbf{h}}_1^{\sf{H}}){\mathbf{Gw}}} \right|}_N}}}} \right]^{\text{*}}}{\text{diag}}({\mathbf{h}}_2^{\sf{H}}){\mathbf{Gw}}  \\ & \quad + {n_2} \hfill \\
  &  = \frac{{{h_{11}}h_{21}^*{{\left| {{a_1}} \right|}^2}}}{{\left| {{b_1}} \right|}} + \frac{{{h_{12}}h_{22}^*{{\left| {{a_2}} \right|}^2}}}{{\left| {{b_2}} \right|}} \ldots  + \frac{{{h_{1N}}h_{2N}^*{{\left| {{a_N}} \right|}^2}}}{{\left| {{b_N}} \right|}} + {n_2}, \hfill \\ 
\end{aligned}
\end{equation}
where $h_{1i}$ and $h_{2i}$ denote the $i$-th elements of $\mathbf{h}_1$ and $\mathbf{h}_2$, respectively. Since $\mathop {\lim }\limits_{N \to  + \infty } \sum\nolimits_{i = 1}^N {{h_{1i}}h_{2i}^*}  = 0$, when the coefficients $\frac{\left| a_i\right|^2}{\left| b_i\right|}, i=1,\ldots,N$ are equal, we obtain 
\begin{equation} \label{eq:beam_focus_req}
   \mathop {\lim }\limits_{N \to  + \infty } \frac{{{h_{11}}h_{21}^*{{\left| {{a_1}} \right|}^2}}}{{\left| {{b_1}} \right|}} + \frac{{{h_{12}}h_{22}^*{{\left| {{a_2}} \right|}^2}}}{{\left| {{b_2}} \right|}} \ldots  + \frac{{{h_{1N}}h_{2N}^*{{\left| {{a_N}} \right|}^2}}}{{\left| {{b_N}} \right|}} = 0.
\end{equation}
From \eqref{eq:beam_focus_req}, it can be implied that when the receiver is not at Bob's location, its received signal is almost background noise. That is, the beam focusing effect will appear. Hence, the proof is complete.



\ifCLASSOPTIONcaptionsoff
  \newpage
\fi




\bibliographystyle{IEEEtran}
\bibliography{./bibtex/bib/IEEEabrv,./bibtex/bib/ref}

\begin{IEEEbiography}[{\includegraphics[width=1in,height=1.25in,clip,keepaspectratio]{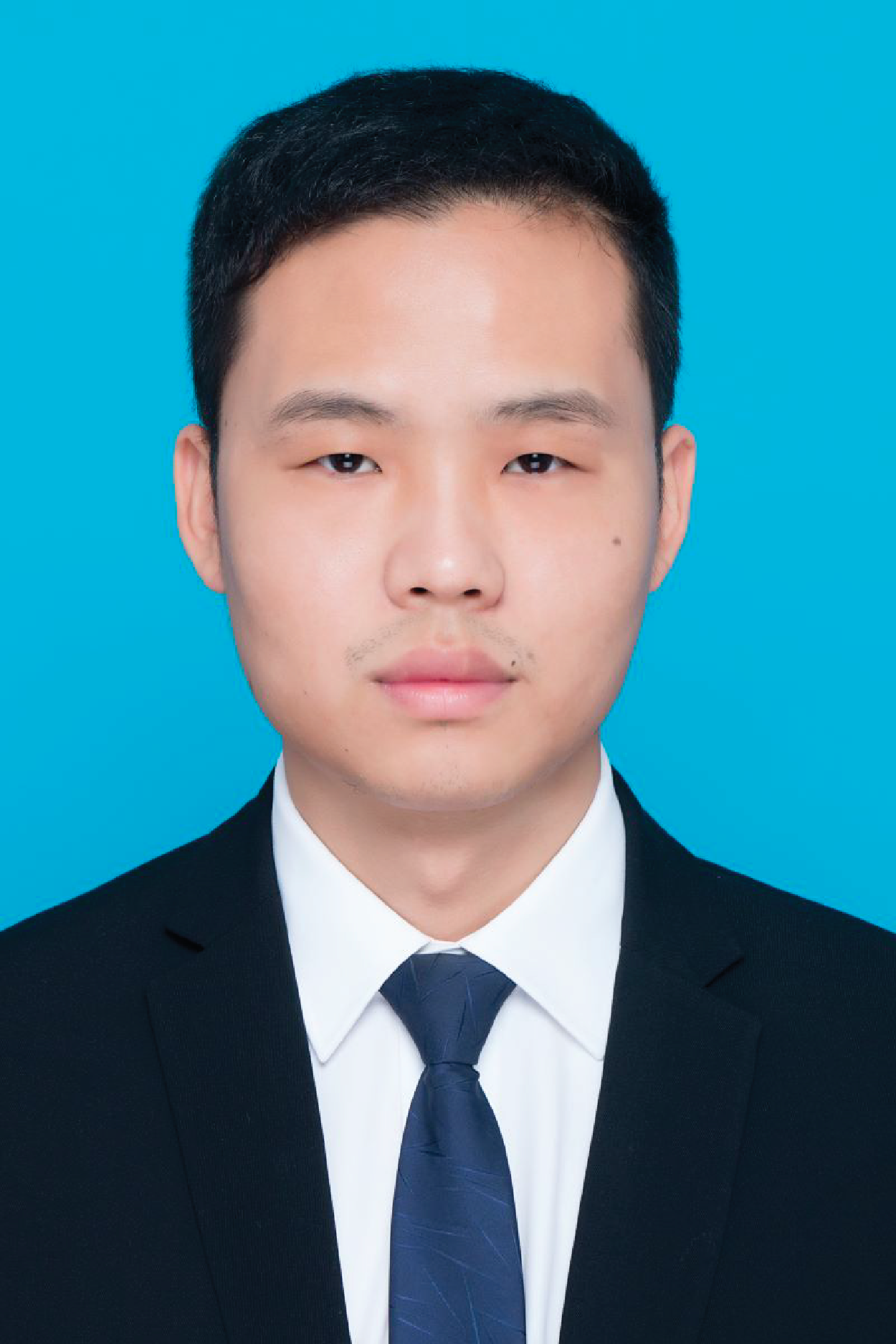}}]{Jun Liu} received the B.S. and M.S. degrees from Central China Normal University, Wuhan, China, in 2015 and 2018, respectively. He is currently pursuing the Ph.D. degree at the University of Electronic Science and Technology of China, Chengdu, China.

His research interests include reconfigurable-intelligent-surface empowered communications and near-field communications.
\end{IEEEbiography}

\begin{IEEEbiography}[{\includegraphics[width=1in,height=1.25in,clip,keepaspectratio]{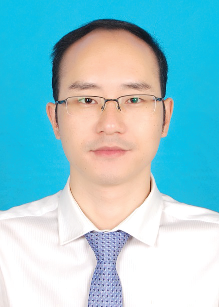}}]{Gang Yang} (S'13-M'15) received the B.Eng. and M.Eng. degrees in communication engineering, communication and information systems, in 2008 and 2011, respectively, from the University of Electronic Science and Technology of China, and the Ph.D. degree from Nanyang Technological University, Singapore, in 2015. He was a Postdoctoral Researcher with the Department of Electrical and Computer Engineering, National University of Singapore, in 2015. He is currently a Full Professor with the Shenzhen Institute for Advanced Study, and the National Key Laboratory of Wireless Communications, University of Electronic Science and Technology of China. His current research interests include next-generation backscatter communications for passive/ambient IoT, reconfigurable-intelligent-surface empowered communications, integrated sensing and communications, and near-field communications.

Dr. Yang was the first-author recipient of the IEEE Communications Society Stephen O. Rice Prize Paper Award in the field of communications theory in 2021, the IEEE Communications Society Transmission, Access, and Optical Systems Technical Committee Best Paper Award in 2016, and the Chinese Government Award for Outstanding Self-Financed Students Abroad in 2015. He is a Top 2\% Scientists by Stanford University. He is the leading editor for the special issue on Integrated Sensing and Communications for 6G IoE of IEEE Internet of Things Journal, an editor for IEEE Communications Letters, an associated editor for IEEE Open Journal of the Communications Society, and an editor for China Communications. He serves for the IEEE GLOBECOMM 2024 as IoT and Sensor Networks Symposium Co-Chair, served for the IEEE GLOBECOMM 2017 as Publicity Co-Chair, and organized three workshops in IEEE VTC-Fall 2022, IEEE VTC-Spring 2021 and IEEE ICCC 2019.
\end{IEEEbiography}

\begin{IEEEbiography}[{\includegraphics[width=1in,height=1.25in,clip,keepaspectratio]{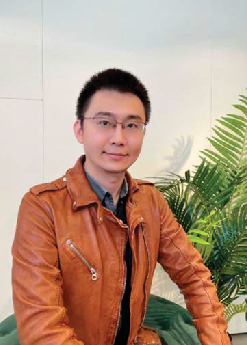}}]{Yuanwei Liu} (S'13-M'16-SM'19-F’24, \url{https://www.eee.hku.hk/~yuanwei/}) has been a (tenured) full Professor in Department of Electrical and Electronic Engineering (EEE) at The University of Hong Kong (HKU) since September, 2024. Prior to that, he was a Senior Lecturer (Associate Professor) (2021-2024) and a Lecturer (Assistant Professor) (2017- 2021) at Queen Mary University of London (QMUL), London, U.K, and a Postdoctoral Research Fellow (2016-2017) at King's College London (KCL), London, U.K. He received the Ph.D. degree from QMUL in 2016.  His research interests include non-orthogonal multiple access, reconfigurable intelligent surface, near field communications, integrated sensing and communications, and machine learning. 

Yuanwei Liu is a Fellow of the IEEE, a Fellow of AAIA, a Web of Science Highly Cited Researcher, an IEEE Communication Society Distinguished Lecturer, an IEEE Vehicular Technology Society Distinguished Lecturer, the rapporteur of ETSI Industry Specification Group on Reconfigurable Intelligent Surfaces on work item of “Multi-functional Reconfigurable Intelligent Surfaces (RIS): Modelling, Optimisation, and Operation”, and the UK representative for the URSI Commission C on “Radio communication Systems and Signal Processing”. He was listed as one of 35 Innovators Under 35 China in 2022 by MIT Technology Review. He received IEEE ComSoc Outstanding Young Researcher Award for EMEA in 2020. He received the 2020 IEEE Signal Processing and Computing for Communications (SPCC) Technical Committee Early Achievement Award, IEEE Communication Theory Technical Committee (CTTC) 2021 Early Achievement Award. He received IEEE ComSoc Outstanding Nominee for Best Young Professionals Award in 2021. He is the co-recipient of the 2024 IEEE Communications Society Heinrich Hertz Award, the Best Student Paper Award in IEEE VTC2022-Fall, the Best Paper Award in ISWCS 2022, the 2022 IEEE SPCC-TC Best Paper Award, the 2023 IEEE ICCT Best Paper Award, and the 2023 IEEE ISAP Best Emerging Technologies Paper Award. He serves as the Co-Editor-in-Chief of IEEE ComSoc TC Newsletter, an Area Editor of IEEE Communications Letters, an Editor of IEEE Communications Surveys \& Tutorials, IEEE Transactions on Wireless Communications, IEEE Transactions on Vehicular Technology, IEEE Transactions on Network Science and Engineering, IEEE Transactions on Cognitive Communications and Networking, and IEEE Transactions on Communications (2018-2023). He serves as the (leading) Guest Editor for Proceedings of the IEEE on Next Generation Multiple Access, IEEE JSAC on Next Generation Multiple Access, IEEE JSTSP on Intelligent Signal Processing and Learning for Next Generation Multiple Access, and IEEE Network on Next Generation Multiple Access for 6G. He serves as the Publicity Co-Chair for IEEE VTC 2019-Fall, the Panel Co-Chair for IEEE WCNC 2024, Symposium Co-Chair for several flagship conferences such as IEEE GLOBECOM, ICC and VTC. He serves the academic Chair for the Next Generation Multiple Access Emerging Technology Initiative, vice chair of SPCC and Technical Committee on Cognitive Networks (TCCN).
\end{IEEEbiography}

\begin{IEEEbiography}[{\includegraphics[width=1in,height=1.25in,clip,keepaspectratio]{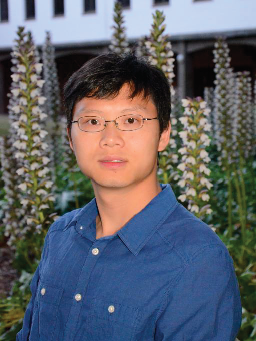}}]{Xiangyun Zhou} (F'23) is an Associate Professor at the Australian National University (ANU). He received the Ph.D. degree from ANU in 2010. His research interests are in the fields of communication theory and wireless networks. He has served as an Editor of IEEE TRANSACTIONS ON COMMUNICATIONS, IEEE TRANSACTIONS ON WIRELESS COMMUNICATIONS and IEEE WIRELESS COMMUNICATIONS LETTERS, and as an Executive Editor of IEEE COMMUNICATIONS LETTERS. He also served as symposium/track and workshop chairs for major IEEE conferences. He is a recipient of the Best Paper Award at ICC'11, GLOBECOM'22, ICC'24 and IEEE ComSoc Asia-Pacific Outstanding Paper Award in 2016. He was named the Best Young Researcher in the Asia-Pacific Region in 2017 by IEEE ComSoc Asia-Pacific Board. He is a Fellow of the IEEE.
\end{IEEEbiography}

\end{document}